\documentclass[11pt]{article}
\usepackage[T1]{fontenc}
\usepackage{lmodern,amsmath,amssymb,amsthm,mathtools}
\usepackage[margin=1in]{geometry}
\usepackage{microtype,booktabs,graphicx,enumitem}
\usepackage[hidelinks]{hyperref}
\hypersetup{pdftitle={Spherical Convex Hull Membership: Complexity and Performance of Geometric and Accelerated Methods},pdfauthor={Bahman Kalantari}}
\setlist{itemsep=2pt,topsep=4pt}
\newtheorem{theorem}{Theorem}[section]
\newtheorem{lemma}[theorem]{Lemma}
\newtheorem{proposition}[theorem]{Proposition}
\newtheorem{corollary}[theorem]{Corollary}
\theoremstyle{remark}
\newcommand{\conv}{\operatorname{conv}}
\newcommand{\norm}[1]{\left\|#1\right\|}
\newcommand{\argmin}{\operatorname*{arg\,min}}

\title{Spherical Convex Hull Membership:\\
Complexity and Performance of Geometric and Accelerated Methods}
\author{Bahman Kalantari\thanks{Emeritus Professor of Computer Science,
Rutgers University.}}
\date{}
\begin{document}
\maketitle

\begin{abstract}
We study the problem of testing membership of the origin in the convex
hull of $n$ unit vectors in $\mathbb{R}^m$, seeking a convex combination of norm
at most $\varepsilon$ or a strict separating hyperplane. The Spherical Triangle
Algorithm (STA) uses nearest-point updates on pivot segments. We first
establish a conditional $O(1/\varepsilon)$ iteration bound from accumulated
geometric progress, allowing deeper pivots to compensate for shallow ones.
A three-point example shows that ordinary STA can require
$\Theta(\varepsilon^{-2})$ iterations. We show that the same conditional bound
holds for STA with Wolfe minimum-norm correction (STA--WMC), provided
the averaged geometric progress condition holds for the STA steps
taken within this corrected method. We then combine STA, WMC, and
the Nemirovski weighted-pivot method (NWP) in a hybrid with unconditional
bounds of $O(\sqrt{\log n}/\varepsilon)$ matrix--vector products with the input
matrix or its transpose and $O(mn\sqrt{\log n}/\varepsilon)$ arithmetic
operations, including corrections. We also give a self-contained
derivation of the $O(\sqrt{\log n}/\varepsilon)$ membership iteration bound
for entropic Nesterov acceleration. Computationally, we compare five
methods---ordinary STA, STA--WMC, the hybrid, standalone NWP, and
entropic Nesterov acceleration---on convex-hull queries derived from
14 public benchmark datasets and on 204 controlled synthetic inputs.
STA--WMC completes all cases in both comparisons, and the hybrid
completes all public-data cases. STA performs well on favorable
interior and separation instances; Nesterov acceleration is faster
than the STA-based methods on the highest-dimensional random
membership tests. The experiments show complementary strengths and
distinguish practical performance from worst-case guarantees.
\end{abstract}

\noindent\textbf{Keywords.} Convex hull membership; Spherical Triangle
Algorithm; amortized analysis; Wolfe minimum-norm
correction; Nemirovski's Mirror-Prox method; Nesterov acceleration;
safeguarded acceleration.

\noindent\textbf{Mathematics Subject Classification (2020).}
Primary 65K05; Secondary 90C25, 52B55.

\section{Introduction}
Let $S=\{v_1,\ldots,v_n\}\subset\mathbb{R}^m$, where
$\|v_i\|=1$ for each $i$. The spherical convex hull membership
problem asks whether
\begin{equation}\label{eq:chm}
0\in\operatorname{conv}(S).
\end{equation}
Write $V=[v_1,\ldots,v_n]$ and
$\Delta_n=\{x\in\mathbb{R}^n:x\geq0,\ \mathbf{1}^Tx=1\}$.
Throughout, $n\geq2$ and $0<\varepsilon<1$.
Unsubscripted norms are Euclidean.
An approximate answer is an explicitly represented point
$p=Vx$, $x\in\Delta_n$, with $\norm p\leq\varepsilon$.
Such an answer certifies proximity to the hull, not exact membership.
A strict separator is a vector $u$ with $v_i^Tu>0$ for every $i$.
Geometrically, STA moves along a segment toward an input vertex to
approach the origin; separation is established when all input points
lie strictly on one side of a hyperplane with the origin on the other.

As established in~\cite{Kalantari2015, KalantariZhang2022}, general convex hull membership, in both
its exact and approximate forms, reduces to the spherical problem.
We recall the normalization and recovery of convex coefficients.
If an input point equals the target $b$, membership is immediate.
Otherwise, for input points $a_i\ne b$, let
$r_i=\norm{a_i-b}$ and $v_i=(a_i-b)/r_i$.
Given $x\in\Delta_n$, define
$
 \alpha_i=(x_i/r_i)/(\sum_j x_j/r_j).
$
Then $\alpha=(\alpha_1,\ldots,\alpha_n)\in\Delta_n$ and
$
 \sum_i\alpha_i a_i-b
 ={Vx}/(\sum_j x_j/r_j).
$
Consequently, $Vx=0$ yields an exact representation of $b$,
while $\norm{Vx}\leq\varepsilon$ yields
$\left\|\sum_i\alpha_i a_i-b\right\| \leq \varepsilon\max_i\norm{a_i-b}$.

The Triangle Algorithm \cite{Kalantari2015} and its spherical form
\cite{KalantariZhang2022} use geometric pivots and nearest-point
segment updates, with distance duality supplying separation certificates.
Ordinary STA returns an approximate point or a strict separator
within $O(\varepsilon^{-2})$ iterations.
If the hull contains a radius-$\rho$ ball about zero in its linear
span, deepest pivots give $O(\rho^{-2}\log(1/\varepsilon))$ iterations.
Convex hull membership and separation also support vertex enumeration
and optimization \cite{AwasthiKalantariZhang2020,Kalantari2019Separation}.

We establish a cumulative geometric condition for $O(1/\varepsilon)$
convergence of ordinary STA: deeper pivots can compensate for shallow
ones through an average normalized-depth bound. We also consider
STA with Wolfe minimum-norm correction (WMC), denoted STA--WMC.
This correction recombines selected input points to seek a convex
combination of smaller norm. The same conditional bound holds for
STA--WMC when the averaged geometric progress condition holds for
the ordinary STA steps taken within this corrected method.
A three-point example shows that ordinary STA can nevertheless
require $\Theta(\varepsilon^{-2})$ iterations. We then combine STA, WMC,
and the Nemirovski weighted-pivot method (NWP).
A direct proof for NWP and amortized accounting of all STA and
correction work give the hybrid an unconditional
$O(mn\sqrt{\log n}/\varepsilon)$ arithmetic bound.

\paragraph{Relation to established methods.}
WMC uses Wolfe's affine minimization and deletion geometry
\cite{Wolfe1976}, with the segment initialization attributed to
Betke \cite{Betke2004} in \cite[Section~2.1]{DadushVeghZambelli2018}.
Geometry-dependent linear rates for Frank--Wolfe variants
\cite{LacosteJulienJaggi2015} can be stronger on favorable inputs,
but involve additional geometric constants.

The accelerated order $O(\sqrt{\log n}/\varepsilon)$ is available through
entropic Nesterov acceleration \cite{Nesterov2005} and the
smoothed-perceptron estimates of Ramdas and Pe\~na
\cite{RamdasPena2014}. NWP specializes Nemirovski's Mirror-Prox
construction \cite{Nemirovski2004}. Alongside the cumulative STA
analysis and sharp slow example, we give self-contained proofs of the
specialized Nesterov membership bound and the NWP approximate-point-or-separator
bound, and prove the hybrid's accelerated guarantee with complete
correction-work accounting. These derivations make the certificates,
assumptions, and implementation costs explicit; the underlying
first-order methods and their accelerated order are established.
Table~\ref{tab:guarantees} summarizes the guarantees.
Earlier computational comparisons appear in \cite{Filippozzi2023}.

\paragraph{Computational scope.}
We compare five methods: ordinary STA, NWP, entropic Nesterov
acceleration, the hybrid, and STA--WMC with remembered pivots. The experiments examine when STA steps suffice,
when correction pays for its cost, and how the safeguard affects
runtime and completion within fixed budgets.

\paragraph{Work model and organization.}
We count full products by $V$ or $V^T$, each costing $O(mn)$,
and all additional arithmetic, including amortized affine work.
The bounds use exact real arithmetic with unit-cost scalar
exponentials and logarithms; the experiments use double precision.
Sections~\ref{sec:sta}--\ref{sec:amortized} analyze STA;
Sections~\ref{sec:backup} and~\ref{sec:comparison} describe NWP and
Nesterov acceleration. Sections~\ref{sec:hybrid}--\ref{sec:modified}
develop the combinations, Section~\ref{sec:computations} reports the
experiments, and Section~\ref{sec:practice} gives conclusions and
practical guidance. Appendix~\ref{sec:synthetic-details} contains
supporting Nesterov proofs.

\section{STA: geometry, progress, and a sharp slow trajectory}
\label{sec:sta}
At a nonzero point $p=Vx$, $x\in\Delta_n$, an ordinary
\emph{pivot} is an input vertex $v\in S$ satisfying
$\norm{p-v}\geq\norm{v}$, or equivalently,
$p^Tv\leq\norm{p}^2/2$.
Following the terminology of the original Triangle Algorithm
in~\cite{Kalantari2015}, a point $p\in\conv(S)$ admitting no such pivot is called
a \emph{witness}. In this case,
$p^Tv_i>\norm{p}^2/2$ for every $i$,
so the perpendicular bisector hyperplane of the segment joining
the origin to $p$,
$H=\left\{z\in\mathbb{R}^m: p^Tz=\norm{p}^2/2\right\}$,
strictly separates the origin from $\conv(S)$.
Moreover, writing $d_*=\operatorname{dist}(0,\conv(S))$, a witness gives
\[
 \tfrac12\norm p\leq d_*\leq\norm p.
\]
Indeed, every $z\in\conv(S)$ satisfies
$\norm p\,\norm z\geq p^Tz>\norm p^2/2$, and $p\in\conv(S)$.

A \emph{strict pivot}, as used in spherical TA, satisfies
the stronger condition $p^Tv\leq0$; the weak inequality is
part of this convention. Every strict pivot is an ordinary pivot,
since $\norm{p-v}^2=\norm p^2+1-2p^Tv\geq1$.
A deepest pivot minimizes the scores $p^Tv_i$.
If their minimum $c=\min_i p^Tv_i$ is positive, $p$ gives a
strict separating direction: the hyperplane $p^Tz=c/2$
separates the origin from $\conv(S)$, and
\[
 \frac{c}{\norm p}\leq d_*\leq\norm p.
\]
However, $c>0$ need not imply $c>\norm p^2/2$.
Thus failure to find a strict pivot gives a separator, but the
perpendicular-bisector conclusion and the $\norm p/2$ lower bound
require the stronger witness test above.
The implementations studied here stop at either an approximate
point or a positive minimum score.

If $0\in\conv(S)$, a strict pivot always exists:
for any $x^*\in\Delta_n$ with $Vx^*=0$,
$\sum_i x_i^*p^Tv_i=0$,
so the scores cannot all be positive.

With a strict pivot $v$, STA replaces $p$ by the point of $[p,v]$
nearest the origin:
\begin{equation}\label{eq:sta}
 p^+=(1-t)p+tv,\qquad
 t=\frac{\norm p^2-p^Tv}{\norm{v-p}^2}.
\end{equation}
Here $0<t<1$, because the numerator is positive and the denominator
exceeds it by $1-p^Tv>0$. Original convex coefficients
are updated by $x^+=(1-t)x+te_j$ if $v=v_j$.
The segment operation costs $O(m+n)$ beyond the score scan.
No projection onto the full hull is required.

\begin{lemma}[Exact progress]\label{lem:progress}
Let $\delta=\norm p>0$ and $s=-p^Tv/\delta\geq0$. Then
\begin{equation} \label{eq:residual}
 t=\frac{\delta(\delta+s)}{1+\delta^2+2\delta s}, \quad 
 \delta^2-\norm{p^+}^2
 =\frac{\delta^2(\delta+s)^2}{1+\delta^2+2\delta s}, \quad
 \norm{p^+}^2=\frac{\delta^2(1-s^2)}{1+\delta^2+2\delta s}.
\end{equation}
\end{lemma}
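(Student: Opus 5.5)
The plan is a direct substitution into the step-size formula \eqref{eq:sta}, followed by the standard identity for the nearest point on a segment. We express everything through $\delta=\norm p$, the unit norm $\norm v=1$, and $p^Tv=-\delta s$.

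\textbf{The step size.} The numerator of $t$ is
\[
\norm p^2-p^Tv=\delta^2+\delta s=\delta(\delta+s).
\]
The denominator is
\[
\norm{v-p}^2=\norm v^2-2p^Tv+\norm p^2=1+2\delta s+\delta^2.
\]
Dividing gives the first formula in \eqref{eq:residual}. The denominator is positive, so no division issue arises.

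\textbf{The decrease.} Write $p^+=p+t(v-p)$ and expand:
\[
\norm{p^+}^2=\norm p^2+2t\,p^T(v-p)+t^2\norm{v-p}^2 .
\]
By the definition of $t$ we have $p^T(v-p)=-t\norm{v-p}^2$. Substituting, the last two terms combine to $-t^2\norm{v-p}^2$. Equivalently,
\[
\delta^2-\norm{p^+}^2=\frac{(\norm p^2-p^Tv)^2}{\norm{v-p}^2}.
\]
Inserting the numerator $\delta(\delta+s)$ and denominator $1+\delta^2+2\delta s$ from the first step yields the second formula.

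\textbf{The residual norm.} Subtract the decrease from $\delta^2$ over the common denominator:
\[
\delta^2(1+\delta^2+2\delta s)-\delta^2(\delta+s)^2=\delta^2\bigl(1+\delta^2+2\delta s-\delta^2-2\delta s-s^2\bigr)=\delta^2(1-s^2).
\]
This gives the third formula.

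No step is a genuine obstacle. The only point needing care is the sign convention $s=-p^Tv/\delta$, so that $-p^Tv=+\delta s$ appears in both the numerator and the denominator. As a sanity check, $s\le\norm v=1$ by Cauchy--Schwarz, so the third formula is nonnegative, as it must be.
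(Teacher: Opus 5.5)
Your proposal is correct and follows essentially the same route as the paper: you substitute $p^Tv=-\delta s$ and $\norm v=1$ into \eqref{eq:sta}, use the projection identity $\delta^2-\norm{p^+}^2=(\norm p^2-p^Tv)^2/\norm{v-p}^2$, and subtract from $\delta^2$. The only difference is that you derive this identity by expanding $\norm{p+t(v-p)}^2$, where the paper simply cites it as the squared-norm reduction under orthogonal projection; your algebra is correct at every step.
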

\begin{proof}
Substitute $p^Tv=-\delta s$ and $\norm v=1$ into
\eqref{eq:sta}. Orthogonal projection onto the line through $p,v$
reduces squared norm by
$(\norm p^2-p^Tv)^2/\norm{v-p}^2$. Subtracting this expression from $\delta^2$
gives the last equation in \eqref{eq:residual}.
\end{proof}

\begin{proposition}[Approximation or separation]\label{prop:quadratic}
Write $\delta_k=\norm{p_k}$ for the residual after $k$ strict-pivot
STA updates. For any input,
\[
 \delta_k\leq(\delta_0^{-2}+k)^{-1/2},
\]
unless the algorithm has already terminated. Consequently, STA
returns a convex combination of norm at most $\varepsilon$ or a strict
separator within $O(\varepsilon^{-2})$ iterations and $O(mn/\varepsilon^2)$
arithmetic operations.
\end{proposition}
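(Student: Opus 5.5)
The plan is to turn Lemma~\ref{lem:progress} into a recursion for $\delta_k^{-2}$ and telescope it. For a strict pivot we have $s\geq0$. Since $\delta_k\leq\delta_0\leq1$ (initialize at an input vertex, or more generally observe that any $p=Vx$ has norm at most $1$), we also have $s\leq1$, because $s=-p^Tv/\delta\leq\norm v=1$.

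\textbf{Step 1: a one-step inequality.} From the last formula in \eqref{eq:residual},
\[
 \delta_{k+1}^{-2}=\delta_k^{-2}\,\frac{1+\delta_k^2+2\delta_k s}{1-s^2}.
\]
If $s=1$, then $p^+=0$ and the algorithm terminates with an exact point. Otherwise $1-s^2\leq1$, so
\[
 \delta_{k+1}^{-2}\geq \delta_k^{-2}\bigl(1+\delta_k^2+2\delta_k s\bigr)\geq \delta_k^{-2}+1.
\]
The last inequality uses only $s\geq0$: multiplying out gives $\delta_k^{-2}+1+2s/\delta_k$, and the terms involving $s$ are nonnegative and can be dropped. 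This is the whole content of the argument. The pivot depth only helps, and the worst case $s=0$ still yields an additive gain of $1$ in $\delta^{-2}$.

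\textbf{Step 2: telescoping.} Summing over $k$ steps gives $\delta_k^{-2}\geq\delta_0^{-2}+k$, that is, $\delta_k\leq(\delta_0^{-2}+k)^{-1/2}$. This bound holds as long as every step used a strict pivot, i.e.\ unless the algorithm has stopped earlier.

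\textbf{Step 3: termination count.} At each iterate, either $\delta_k\leq\varepsilon$, in which case we return $p_k=Vx_k$ with $x_k\in\Delta_n$ maintained by the coefficient update, or a strict pivot exists and we take a step. The remaining case is $\min_i p_k^Tv_i>0$, which means $p_k$ itself is a strict separator. The bound from Step~2 forces $\delta_k\leq\varepsilon$ once $k\geq\varepsilon^{-2}$, so the number of iterations is at most $\lceil\varepsilon^{-2}\rceil$.

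\textbf{Step 4: arithmetic cost.} Each iteration computes the scores $V^Tp$ in $O(mn)$ operations. The segment update \eqref{eq:sta} and the update of $x$ cost $O(m+n)$. Together this gives $O(mn/\varepsilon^2)$ operations overall.

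The only point needing care is the case analysis in Step~1: excluding $s=1$ (exact termination at the origin) and confirming that $1-s^2>0$ otherwise. No real obstacle is expected.
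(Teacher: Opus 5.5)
Your proof is correct and follows the paper's argument. Your one-step inequality $\delta_{k+1}^{-2}\geq\delta_k^{-2}+1$ is exactly the paper's $\delta_{k+1}^2\leq\delta_k^2/(1+\delta_k^2)$ read off from \eqref{eq:residual}, followed by the same telescoping, the same strict-pivot-or-separator dichotomy giving at most $\lceil\varepsilon^{-2}\rceil$ updates, and the same $O(mn)$ per-update cost; your explicit treatment of $s=1$ (exact arrival at the origin) is extra care the paper leaves implicit, though $s\leq1$ follows from Cauchy--Schwarz alone and does not need $\delta_k\leq1$.
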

\begin{proof}
Equation~\eqref{eq:residual} implies
$\delta_{k+1}^2\leq\delta_k^2/(1+\delta_k^2)$ whenever a strict
pivot is used. Taking reciprocals and summing proves the bound.
At each nonterminal point, either all scores are positive and give
a separator, or a strict pivot is available. After at most
$\lceil\varepsilon^{-2}\rceil$ such updates the residual is at most $\varepsilon$.
Each update and its full score scan cost $O(mn)$.
\end{proof}

\subsection{A three-point example}
Consider $S=\{(-1,0),(1,0),(0,1)\}$ and start at $p_0=(0,1)$.
Either horizontal vertex can be the first pivot. Thereafter deepest
pivots alternate between the two horizontal vertices. Put
$\eta_k=\norm{p_k}^2$. For $k\geq1$ the iterates have the form
$p_k=(\sigma_k\eta_k,\sqrt{\eta_k-\eta_k^2})$, where $\eta_1=\tfrac12$ and $\sigma_k\in\{-1,1\}$.
For the opposite horizontal pivot, direct substitution in
\eqref{eq:sta} gives
\begin{equation}\label{eq:zigzag}
 \eta_{k+1}=\frac{\eta_k(1-\eta_k)}{1+3\eta_k},\qquad
 \frac1{\eta_{k+1}}-\frac1{\eta_k}=\frac4{1-\eta_k}.
\end{equation}
The recurrence preserves the displayed form and reverses $\sigma_k$.
Since $0<\eta_k\leq1/2$, the reciprocal increments lie between $4$ and
$8$. Thus $\eta_k=\Theta(1/k)$ and the residual accuracy complexity is
$\Theta(\varepsilon^{-2})$ even with deepest pivots.

The two cycling pivots have midpoint zero. Remembering and combining
them resolves the example immediately, whereas the ordinary segment
update repeatedly loses that opportunity. This motivates geometric
memory, but resolving this example alone does not establish a faster
general bound.

\subsection{Interior geometry}
If the hull contains a radius-$\rho$ ball about zero in its linear
span, a deepest pivot $v_{i_k}$ at $p_k$ has opposite depth
$s_k=-p_k^Tv_{i_k}/\delta_k\geq\rho$. Since
$0<\delta_k\leq1$ and $0\leq s_k\leq1$, the denominator in
\eqref{eq:residual} is at most $4$. Consequently
$\delta_{k+1}^2\leq(1-\rho^2/4)\delta_k^2$, giving
$O(\rho^{-2}\log(1/\varepsilon))$ iterations. This familiar favorable
regime \cite{Kalantari2015, KalantariZhang2022} does not cover boundary membership
such as \eqref{eq:zigzag}. The next section gives an intermediate,
trajectory-based regime.

\section{Conditional amortized
\texorpdfstring{$O(1/\varepsilon)$}{O(1/epsilon)}
convergence of ordinary STA}
\label{sec:amortized}
Let $v_{i_k}$ be the strict pivot chosen at iteration $k$, and set
$s_k=-p_k^Tv_{i_k}/\delta_k$. The quantity relevant to reciprocal residual progress is
\begin{equation}\label{eq:depthscore}
 a_k=\frac{s_k^2}{\delta_k}
     =\frac{(p_k^Tv_{i_k})^2}{\norm{p_k}^3}.
\end{equation}
It is available from the pivot score and the current residual.

\begin{lemma}[Reciprocal progress]\label{lem:reciprocal}
For an STA update whose resulting residual is positive,
\begin{equation}\label{eq:reciprocal}
 \delta_{k+1}^{-1}-\delta_k^{-1}
 \geq s_k+\frac{s_k^2}{2\delta_k}\geq\frac12 a_k.
\end{equation}
Hence every finite nonzero trajectory satisfies
\begin{equation}\label{eq:finite-amortized}
 \delta_N\leq\left(\delta_0^{-1}+
                   \frac12\sum_{k=0}^{N-1}a_k\right)^{-1}.
\end{equation}
\end{lemma}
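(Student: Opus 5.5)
The plan is to read the inequality directly off the closed form for $\norm{p^+}^2$ in Lemma~\ref{lem:progress}. Write $\delta=\delta_k$ and $s=s_k$. Since the pivot is strict, $s\ge0$. Since $\norm v=1$ and $\norm{p_k}=\delta$, Cauchy--Schwarz gives $s\le1$. The third identity in \eqref{eq:residual} reads
\[
\frac{\delta_k^2}{\delta_{k+1}^2}=\frac{1+\delta^2+2\delta s}{1-s^2}.
\]
The hypothesis that the resulting residual is positive is exactly what excludes $s=1$, so the quotient is well defined and $1-s^2>0$. The first inequality in \eqref{eq:reciprocal} is equivalent to $\delta_k/\delta_{k+1}\ge 1+\delta s+s^2/2$. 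Both sides are positive, so I square it. Multiplying through by $1-s^2>0$, it suffices to prove the polynomial inequality
\[
1+\delta^2+2\delta s\;\ge\;(1-s^2)\bigl(1+\delta s+\tfrac12 s^2\bigr)^2 .
\]

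The key step is expanding the right-hand side. First, $(1+\delta s+s^2/2)^2=1+2\delta s+s^2+\delta^2s^2+\delta s^3+s^4/4$. Multiplying this by $1-s^2$, the $s^2$ terms cancel, which is the crucial cancellation. The result is
\[
1+2\delta s+\delta^2s^2-\delta s^3-\tfrac34 s^4-\delta^2s^4-\delta s^5-\tfrac14 s^6 .
\]
Subtracting this from the left-hand side leaves
\[
\delta^2(1-s^2)+\delta s^3+\tfrac34 s^4+\delta^2 s^4+\delta s^5+\tfrac14 s^6 .
\]
Every term is nonnegative because $\delta>0$ and $0\le s\le1$, which proves the first inequality. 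I expect the main obstacle to be choosing this comparison target. Cruder routes fail when $\delta$ is small and $s$ is near $1$. For example, bounding the ratio below by $1+(\delta+s)^2$ discards the $1/(1-s^2)$ amplification, and that amplification is what produces the $s^2/2$ term. So the factor $1-s^2$ should be kept exact and cleared by multiplication, as above.

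The second inequality in \eqref{eq:reciprocal} is immediate. Since $s_k\ge0$, we have $s_k+s_k^2/(2\delta_k)\ge s_k^2/(2\delta_k)=a_k/2$ by \eqref{eq:depthscore}.

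For \eqref{eq:finite-amortized}, consider a finite trajectory all of whose residuals $\delta_0,\dots,\delta_N$ are nonzero. Then every update on it has a positive resulting residual, so \eqref{eq:reciprocal} applies at each step $k=0,\dots,N-1$. Summing the telescoping differences gives $\delta_N^{-1}\ge\delta_0^{-1}+\frac12\sum_{k<N}a_k$. Inverting this, which is valid since both sides are positive, yields the stated bound on $\delta_N$.
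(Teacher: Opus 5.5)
Your proof is correct. It has the same outline as the paper's: start from the exact formula in Lemma~\ref{lem:progress}, bound $\delta_k/\delta_{k+1}$ from below, then telescope. The difference is in how you verify the key inequality.

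You aim directly at the target $1+\delta s+s^2/2$, square, and check a two-variable polynomial inequality by full expansion. I checked your expansion and the leftover nonnegative terms, and they are right.

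The paper instead compares against the slightly larger product $(1+\delta s)(1+s^2/2)$. That comparison splits into two one-line facts: $1+\delta^2+2\delta s-(1+\delta s)^2=\delta^2(1-s^2)\geq0$, and $(1-s^2)^{-1/2}\geq1+s^2/2$.

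The factorization avoids the sixth-degree expansion. It also gives the marginally stronger increment $s+s^2/(2\delta)+s^3/2$, after which the $s^3/2$ term is discarded. Your brute-force route is equally rigorous and makes the slack in the target $1+\delta s+s^2/2$ fully explicit. The paper's route is shorter and more transparent about where each term of the bound comes from.
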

\begin{proof}
Write $\delta=\delta_k$ and $s=s_k$. A positive next residual implies
$0\leq s<1$. Since
$1+\delta^2+2\delta s-(1+\delta s)^2=\delta^2(1-s^2)\geq0$
and $(1-s^2)^{-1/2}\geq1+s^2/2$, the exact update gives
\[
 \delta_{k+1}^{-1}
 =\frac{\sqrt{1+\delta^2+2\delta s}}{\delta\sqrt{1-s^2}}
 \geq\delta^{-1}(1+\delta s)(1+s^2/2).
\]
Subtract $\delta^{-1}$ and discard the nonnegative term $s^3/2$.
Summing proves \eqref{eq:finite-amortized}. If $s=1$, the update reaches
zero and no reciprocal expression is needed.
\end{proof}

\begin{theorem}[Amortized opposite-depth bound]\label{thm:amortized}
Suppose an STA membership trajectory satisfies, for constants
$\gamma>0$ and integer $N_0\geq1$ independent of the requested
tolerance $\varepsilon$,
\begin{equation}\label{eq:amortized-hyp}
 \frac1N\sum_{k=0}^{N-1}a_k\geq\gamma,
 \qquad N\geq N_0,
\end{equation}
as long as the trajectory has not reached zero. Then
\[
 \delta_N\leq\frac1{\delta_0^{-1}+\gamma N/2},\qquad N\geq N_0.
\]
If $T(\varepsilon)$ is the first index with $\delta_k\leq\varepsilon$, then
\begin{equation}\label{eq:amortized-count}
 T(\varepsilon)\leq\max\left\{N_0,
 \left\lceil\frac2\gamma
 (\varepsilon^{-1}-\delta_0^{-1})_+\right\rceil\right\},
 \qquad (r)_+=\max\{r,0\}.
\end{equation}
In particular $T(\varepsilon)=O(N_0+1/(\gamma\varepsilon))$.
\end{theorem}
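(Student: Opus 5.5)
The plan is to combine the finite amortized inequality \eqref{eq:finite-amortized} of Lemma~\ref{lem:reciprocal} with the averaging hypothesis \eqref{eq:amortized-hyp}, and then invert the resulting residual bound to count iterations.

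\emph{Residual bound.} Fix $N\geq N_0$ and suppose the trajectory has not reached zero through index $N$. Then every update up to $N$ produces a positive residual, so Lemma~\ref{lem:reciprocal} applies at each of them. Inequality \eqref{eq:finite-amortized} gives
\[
\delta_N\leq\Bigl(\delta_0^{-1}+\tfrac12\textstyle\sum_{k=0}^{N-1}a_k\Bigr)^{-1}.
\]
The hypothesis \eqref{eq:amortized-hyp} gives $\sum_{k<N}a_k\geq\gamma N$. Since the map $r\mapsto 1/r$ is decreasing on positive reals, this yields $\delta_N\leq(\delta_0^{-1}+\gamma N/2)^{-1}$.

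If instead the trajectory reaches zero at some earlier index, the method has already returned an exact answer. This is consistent with the claim, because $\delta_k=0\leq\varepsilon$ there. I will also note that the algorithm may stop earlier with a separator. The hypothesis is phrased for membership trajectories, so the bound is only asserted while the iteration continues.

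\emph{Iteration count.} Set
\[
N^*=\max\Bigl\{N_0,\bigl\lceil \tfrac{2}{\gamma}(\varepsilon^{-1}-\delta_0^{-1})_+\bigr\rceil\Bigr\}.
\]
Then $N^*\geq N_0$ and $\gamma N^*/2\geq\varepsilon^{-1}-\delta_0^{-1}$. Hence $\delta_0^{-1}+\gamma N^*/2\geq\varepsilon^{-1}$, and the residual bound gives $\delta_{N^*}\leq\varepsilon$. So $T(\varepsilon)\leq N^*$.

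The positive part in \eqref{eq:amortized-count} handles the case $\delta_0\leq\varepsilon$. There $T(\varepsilon)=0\leq N_0$ trivially, and the formula still gives the right bound. The final asymptotic statement follows from $\max\{a,b\}\leq a+b$, $\lceil r\rceil\leq r+1$ and $(\varepsilon^{-1}-\delta_0^{-1})_+\leq\varepsilon^{-1}$.

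\emph{Main obstacle.} The only delicate point is bookkeeping about when the hypothesis is in force. Lemma~\ref{lem:reciprocal} needs positive next residuals, and \eqref{eq:amortized-hyp} holds only before the trajectory hits zero. I would therefore state explicitly that if zero is reached at some index $k\leq N^*$, then $T(\varepsilon)\leq k\leq N^*$ directly. Otherwise all residuals through $N^*$ are positive and the argument above applies verbatim.

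Beyond this case split, no new estimate is required. The theorem is essentially a corollary of Lemma~\ref{lem:reciprocal}.
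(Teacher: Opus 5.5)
Your proposal is correct and matches the paper's proof: substitute the averaging hypothesis \eqref{eq:amortized-hyp} into the finite bound \eqref{eq:finite-amortized} of Lemma~\ref{lem:reciprocal}, then solve the resulting residual bound for $N$. Your case split on whether zero or the tolerance is reached before $N^*$ is the bookkeeping that the paper compresses into its remark that earlier exact termination only improves the stopping bound.
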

\begin{proof}
Insert \eqref{eq:amortized-hyp} into
\eqref{eq:finite-amortized} and solve the resulting residual bound
for $N$. Earlier exact termination only improves the stopping bound.
\end{proof}

This is an analysis of ordinary STA, with no change to its update.
The condition is weaker than requiring $a_k\geq\gamma$ at every step:
poor steps may be compensated by other steps. A finite observed average
certifies the finite estimate \eqref{eq:finite-amortized}; it does not
guarantee the same average in the future. A uniform complexity statement
over an input class requires uniform control of both $N_0$ and $\gamma$.
For a fixed trajectory these constants may depend on the input and
pivot choices, but not on $\varepsilon$. The condition is sufficient, not
claimed necessary. Identifying boundary input classes with uniform
constants remains open.

For example, $s_k\geq c\sqrt{\delta_k}$ eventually implies
$O(1/\varepsilon)$ accuracy dependence. More generally, if every consecutive
block of $B$ steps after a fixed transient has
$\sum a_k\geq\gamma B$, summing \eqref{eq:reciprocal} over complete
blocks gives $O(B+1/(\gamma\varepsilon))$ additional steps. The remaining
partial block changes only the additive term. On the three-point
trajectory, however, $s_k=\delta_k$ and $a_k=\delta_k\to0$;
the amortized condition fails, consistently with its sharp slow rate.

The statistic \eqref{eq:depthscore} motivates monitoring reciprocal
progress. A safeguard is still needed when cumulative depth is
insufficient; Section~\ref{sec:hybrid} supplies one without a
trajectory-quality assumption.

\subsection{Relation to earlier pointwise conditions}
The earlier STA analysis \cite{KalantariZhang2022} gives
$O(M/\varepsilon)$ iterations when each non-witness iterate above the
requested tolerance admits, and selects, a pivot satisfying
$\norm{p_k-v_{i_k}}\geq\sqrt{1+\varepsilon/M}$, with $M\geq1$.
Equivalently,
\begin{equation}\label{eq:prior-epsilon-property}
 \delta_k^2+2\delta_k s_k\geq\varepsilon/M.
\end{equation}
For $M$ independent of $\varepsilon$, this already gives conditional
$O(1/\varepsilon)$ complexity.

Condition~\eqref{eq:amortized-hyp} instead controls the average
of $a_k=s_k^2/\delta_k$ along the trajectory, permitting compensation
between iterations. It relaxes a pointwise lower bound on $a_k$,
but is not asserted to be weaker than
\eqref{eq:prior-epsilon-property}, which depends on the tolerance.
Indeed, the earlier condition is automatic for strict pivots when
$\delta_k^2\geq\varepsilon/M$. Theorem~\ref{thm:amortized} supplies an
explicit cumulative progress certificate and its iteration bound.

\section{The Nemirovski weighted-pivot method}
\label{sec:backup}

The \emph{Nemirovski weighted-pivot method} (NWP) specializes
Nemirovski's Mirror-Prox framework \cite{Nemirovski2004} to spherical
convex hull membership. It can run independently or serve as the
hybrid's safeguard, returning an explicit approximate convex
combination or a strict separator.

\subsection{Algorithm}

For $z\in\mathbb{R}^m$, define
\begin{equation}\label{eq:weights}
 F(z)=\log\sum_{i=1}^n e^{-v_i^Tz},\qquad
 w_i(z)=\frac{e^{-v_i^Tz}}{\sum_{j=1}^n e^{-v_j^Tz}},
 \qquad b(z)=Vw(z).
\end{equation}
The point $b(z)\in\conv(S)$ assigns greater weight to vertices
having smaller inner products with $z$. We call $b(z)$ a
\emph{weighted pivot}; it need not be a strict STA pivot.

Initialize
$z_0=r_0=0\in\mathbb{R}^m$ and $c_0=0\in\mathbb{R}^n$.
For $k=0,1,\ldots$, perform the following steps.
\begin{enumerate}
\item Compute the two weighted pivots
\begin{equation}\label{eq:two-pivots}
 b_k=b(z_k),\qquad q_k=b(z_k+r_k).
\end{equation}

\item Update the state using the values from iteration $k$:
\begin{equation}\label{eq:weighted-update}
 \begin{aligned}
 z_{k+1}&=z_k+r_k+b_k,\\
 r_{k+1}&=r_k+q_k,\\
 c_{k+1}&=c_k+w(z_k+r_k).
 \end{aligned}
\end{equation}

\item If $\norm{r_{k+1}}/(k+1)\leq\varepsilon$, return the convex
combination with coefficient vector $c_{k+1}/(k+1)$.
Otherwise, if
$\min_{1\leq i\leq n}v_i^Tz_{k+1}>0$,
return the strict separator $z_{k+1}$.
\end{enumerate}
For a separating direction $z$, put $c=\min_i z^Tv_i>0$;
the hyperplane $z^Ty=c/2$ strictly separates the origin from the hull.

Because $r_0=Vc_0=0$, the updates preserve $r_k=Vc_k$.
Moreover, $c_k$ is the sum of $k$ probability vectors. Thus,
for every $k\geq1$,
\begin{equation}\label{eq:average}
 \bar q_k:=\frac{r_k}{k}
 =V\frac{c_k}{k}
 =\frac1k\sum_{j=0}^{k-1}q_j,
 \qquad \frac{c_k}{k}\in\Delta_n.
\end{equation}
The candidate supplied to the hybrid is this average $\bar q_k$,
together with its coefficients, rather than an individual
weighted pivot. The hybrid retains the NWP state between
activations and applies its own acceptance test.

\subsection{Convergence and complexity}

To establish the convergence and iteration bound of NWP, we consider the quantity
$\Phi_k:=F(z_k)+\tfrac12\norm{r_k}^2$.
We show that $\Phi_k$ never increases. This bounds the norm
of the accumulated vector $r_k$ unless $z_k$ already provides
a strict separator.

\begin{lemma}\label{lem:smooth}
The function $F$ in \eqref{eq:weights} is convex,
$\nabla F(z)=-b(z)$, and its gradient is $1$-Lipschitz
in Euclidean norm. For all $x,u,h\in\mathbb{R}^m$,
\begin{align}
 F(x+h)&\leq F(x)+\nabla F(x)^Th
             +\tfrac12\norm h^2,
             \label{eq:smooth-upper}\\
 F(x)&\geq F(u)+\nabla F(u)^T(x-u)
             +\tfrac12\norm{\nabla F(x)-\nabla F(u)}^2.
             \label{eq:smooth-lower}
\end{align}
\end{lemma}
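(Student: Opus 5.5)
The plan is to treat $F$ as the composition of the log-sum-exp function $\mathrm{lse}(y)=\log\sum_i e^{y_i}$ with the linear map $z\mapsto -V^Tz$, and to transfer the standard smoothness properties of $\mathrm{lse}$ through $V$. Convexity follows because $\mathrm{lse}$ is convex and composition with a linear map preserves convexity. The chain rule gives $\nabla F(z)=-V\,\nabla\mathrm{lse}(-V^Tz)=-Vw(z)=-b(z)$, since $\nabla\mathrm{lse}(y)$ is the softmax vector, which at $y=-V^Tz$ is exactly $w(z)$.

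For the Lipschitz constant, I would compute the Hessian
\[
 \nabla^2F(z)=V\bigl(\operatorname{diag}(w)-ww^T\bigr)V^T ,\qquad w=w(z).
\]
For $h\in\mathbb{R}^m$, the quadratic form $h^T\nabla^2F(z)h$ equals $\sum_i w_i(v_i^Th)^2-\bigl(\sum_i w_iv_i^Th\bigr)^2$. This is the variance of the random variable $v_i^Th$ under the distribution $w$. It is therefore nonnegative, which reconfirms convexity, and it is at most $\sum_i w_i(v_i^Th)^2\le\norm h^2$, because $\norm{v_i}=1$ and Cauchy--Schwarz give $(v_i^Th)^2\le\norm h^2$. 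Hence $0\preceq\nabla^2F\preceq I$. By the mean value theorem in integral form, $\nabla F$ is $1$-Lipschitz.

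This unit-norm step is the only place where the spherical normalization enters. It is the point to check carefully: the bound uses only $\norm{v_i}\le1$ and does not need the operator norm of $V$, which could be as large as $\sqrt n$.

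Inequality \eqref{eq:smooth-upper} then follows from Taylor's formula with integral remainder, using $h^T\nabla^2F h\le\norm h^2$ along the segment from $x$ to $x+h$.

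Inequality \eqref{eq:smooth-lower} is the standard co-coercivity consequence for convex functions with $1$-Lipschitz gradient. I would prove it directly. Fix $u$ and set $\phi(y)=F(y)-\nabla F(u)^Ty$. This $\phi$ is convex, has $1$-Lipschitz gradient, and is minimized at $u$, since $\nabla\phi(u)=0$. Applying \eqref{eq:smooth-upper} to $\phi$ at $x$ with step $h=-\nabla\phi(x)$ gives
\[
 \phi(u)\le\phi\bigl(x-\nabla\phi(x)\bigr)\le\phi(x)-\tfrac12\norm{\nabla\phi(x)}^2 .
\]
Since $\nabla\phi(x)=\nabla F(x)-\nabla F(u)$, substituting the definition of $\phi$ and rearranging yields exactly \eqref{eq:smooth-lower}.

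No step appears to be a real obstacle; the only care needed is the variance bound on the Hessian.
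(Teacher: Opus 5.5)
Your proposal is correct and follows essentially the same route as the paper. The weighted-variance form of the Hessian gives $0\preceq\nabla^2F\preceq I$ using only $\norm{v_i}=1$. The lower bound comes from applying the upper bound to the tilted function $F(y)-\nabla F(u)^Ty$, which is minimized at $u$, with step $-\nabla\phi(x)$ from $x$.
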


\begin{proof}
Differentiation gives $\nabla F(z)=-b(z)$ and
\[
 0\leq h^T\nabla^2F(z)h
 =\sum_i w_i(z)(h^Tv_i)^2
  -\left(\sum_i w_i(z)h^Tv_i\right)^2
 \leq\norm h^2.
\]
The lower bound is nonnegativity of the weighted variance;
the upper bound uses $\norm{v_i}=1$.
Consequently, $F$ is convex with a $1$-Lipschitz gradient,
which gives \eqref{eq:smooth-upper}.

For \eqref{eq:smooth-lower}, fix $u$ and set
$G(t)=F(t)-\nabla F(u)^Tt$.
Then $G$ is convex with a $1$-Lipschitz gradient, and
$\nabla G(u)=0$, so $u$ minimizes $G$.
Applying \eqref{eq:smooth-upper} to $G$ gives
\[
 G(u)\leq G(x-\nabla G(x))
       \leq G(x)-\tfrac12\norm{\nabla G(x)}^2.
\]
Since $\nabla G(x)=\nabla F(x)-\nabla F(u)$, the preceding
inequality gives
\[
 \begin{aligned}
 F(x)-F(u)-\nabla F(u)^T(x-u)
 &=G(x)-G(u)\\
 &\geq\tfrac12\norm{\nabla F(x)-\nabla F(u)}^2.
 \end{aligned}
\]
Adding $F(u)+\nabla F(u)^T(x-u)$ to both sides yields
\eqref{eq:smooth-lower}.
\end{proof}

\begin{theorem}[NWP convergence and complexity]
\label{thm:weighted}
The NWP iterates satisfy, for every $k\geq0$,
\begin{equation}\label{eq:potential}
 F(z_k)+\tfrac12\norm{r_k}^2\leq\log n.
\end{equation}
If $0\in\conv(S)$, then, for every $k\geq1$,
\begin{equation}\label{eq:weighted-rate}
 \norm{\bar q_k}\leq\frac{\sqrt{2\log n}}{k}.
\end{equation}
For any input and any $\varepsilon>0$, NWP returns a convex
combination of norm at most $\varepsilon$ or a strict separator
within
\begin{equation}\label{eq:L}
 L=\left\lceil\frac{\sqrt{2\log n}}{\varepsilon}\right\rceil
\end{equation}
updates. Each update requires $O(mn)$ arithmetic operations.
Auxiliary storage is $O(m+n)$, excluding the input matrix.
\end{theorem}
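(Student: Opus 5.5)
The plan is to prove \eqref{eq:potential} by showing that the potential $\Phi_k=F(z_k)+\tfrac12\norm{r_k}^2$ is nonincreasing, and then to turn the potential bound into a bound on $\norm{r_k}$ by a lower bound on $F$. At $k=0$ we have $F(0)=\log n$ and $r_0=0$, so $\Phi_0=\log n$. It therefore suffices to show $\Phi_{k+1}\le\Phi_k$.

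\textbf{Monotonicity of $\Phi_k$.} Write $y=z_k+r_k$, so that $\nabla F(y)=-q_k$, $\nabla F(z_k)=-b_k$ and $z_{k+1}=y+b_k$.
\begin{enumerate}
\item Apply \eqref{eq:smooth-upper} at $x=y$, $h=b_k$:
\[
F(z_{k+1})\le F(y)-q_k^Tb_k+\tfrac12\norm{b_k}^2 .
\]
\item Apply \eqref{eq:smooth-lower} with $x=z_k$, $u=y$, using $z_k-y=-r_k$:
\[
F(z_k)\ge F(y)+q_k^Tr_k+\tfrac12\norm{b_k-q_k}^2 ,
\]
that is, $F(y)\le F(z_k)-q_k^Tr_k-\tfrac12\norm{b_k-q_k}^2$.
\item Combine the two bounds. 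Expanding $\tfrac12\norm{b_k}^2-q_k^Tb_k-\tfrac12\norm{b_k-q_k}^2=-\tfrac12\norm{q_k}^2$ gives
\[
F(z_{k+1})\le F(z_k)-q_k^Tr_k-\tfrac12\norm{q_k}^2 .
\]
\item From $r_{k+1}=r_k+q_k$ we get exactly
\[
\tfrac12\norm{r_{k+1}}^2=\tfrac12\norm{r_k}^2+q_k^Tr_k+\tfrac12\norm{q_k}^2 .
\]
\end{enumerate}
Adding the last two displays cancels the cross terms and gives $\Phi_{k+1}\le\Phi_k$. By induction this proves \eqref{eq:potential}.

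\textbf{Lower bounds on $F$.} Two cases are needed.
\begin{itemize}
\item If $0\in\conv(S)$, take $x^*\in\Delta_n$ with $Vx^*=0$. Concavity of $\log$ (Jensen) gives
\[
F(z)=\log\sum_i x_i^*\frac{e^{-v_i^Tz}}{x_i^*}\ \ge\ \sum_i x_i^*(-v_i^Tz)+\sum_i x_i^*\log(1/x_i^*)\ \ge\ -z^TVx^*=0,
\]
where the sums run over the support of $x^*$. Dropping the indices outside the support only decreases the left side.
\item For general input, if $z$ is not a strict separator, then $\min_iv_i^Tz\le0$. Since a log-sum-exp is at least its largest term, $F(z)\ge-\min_iv_i^Tz\ge0$.
\end{itemize}
In either case \eqref{eq:potential} yields $\norm{r_k}^2\le2\log n$. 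Dividing by $k$ and using \eqref{eq:average}, we get \eqref{eq:weighted-rate} when $0\in\conv(S)$.

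\textbf{Termination.} For arbitrary input, suppose NWP has not stopped through update $L-1$. Then each tested $z_{k+1}$ was not a separator, so $F(z_L)\ge0$ and $\norm{r_L}/L\le\sqrt{2\log n}/L\le\varepsilon$. Hence the first test fires by update $L$.

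\textbf{Cost and storage.} Each update computes $V^Tz_k$ and $V^T(z_k+r_k)$, the two softmax weight vectors, and the products $b_k,q_k$. This is $O(mn)$ work. The state $z_k,r_k,c_k$ and temporaries take $O(m+n)$ storage. The separator test reuses $V^Tz_{k+1}$, which is needed in the next iteration anyway.

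The main obstacle is step~2 of the monotonicity argument: choosing the right base point in \eqref{eq:smooth-lower} so that the gradient-difference term $\tfrac12\norm{b_k-q_k}^2$ exactly absorbs the $\tfrac12\norm{b_k}^2$ penalty from \eqref{eq:smooth-upper}. Once that choice is made, the rest is bookkeeping.
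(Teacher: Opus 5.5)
Your potential argument (Steps 1--4) and the rate $\norm{\bar q_k}\le\sqrt{2\log n}/k$ are correct and match the paper's proof. The Jensen bound is valid but not needed: when $0\in\conv(S)$ no $z$ is a strict separator, so your max-term bound already gives $F(z)\ge0$, which is the paper's argument.

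The gap is in the termination step. The hypothesis ``NWP has not stopped through update $L-1$'' tells you only that $z_1,\dots,z_{L-1}$ failed the separator test. It says nothing about $z_L$, which is first tested at update $L$. So the inference $F(z_L)\ge0$ does not follow. Your conclusion that the \emph{accuracy} test fires by update $L$ is also false in general. Since $r_k/k=Vc_k/k\in\conv(S)$, we have $\norm{r_k}/k\ge\operatorname{dist}(0,\conv(S))$ for every $k$. If $\operatorname{dist}(0,\conv(S))>\varepsilon$, the accuracy test therefore never fires, and termination must come from a separator. Your argument never invokes the separator branch at update $L$, which is exactly where the separation half of the theorem is proved.

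The repair is to use the potential at update $L$ in the other direction, as the paper does. If $\norm{r_L}/L>\varepsilon$, then $\norm{r_L}>L\varepsilon\ge\sqrt{2\log n}$, so \eqref{eq:potential} forces $F(z_L)<0$. That means $\sum_i e^{-v_i^Tz_L}<1$, so every term is below $1$ and $v_i^Tz_L>0$ for all $i$; the separator test then fires. Equivalently, split on $z_L$. If $z_L$ is a strict separator, the method returns at update $L$ through whichever test it checks first. If not, your max-term bound gives $F(z_L)\ge0$, hence $\norm{r_L}\le\sqrt{2\log n}\le L\varepsilon$, and the accuracy test fires. Either way one of the two tests succeeds by update $L$, which is the claimed bound.
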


\begin{proof}
For each $k\geq0$, applying \eqref{eq:smooth-lower} with
$x=z_k$ and $u=z_k+r_k$, and using
$\nabla F(z_k)=-b_k$ and $\nabla F(z_k+r_k)=-q_k$, gives
\[
 F(z_k+r_k)\leq
 F(z_k)-q_k^Tr_k-\tfrac12\norm{q_k-b_k}^2.
\]
Applying \eqref{eq:smooth-upper} at $z_k+r_k$ with increment
$b_k$ yields
\[
 \begin{aligned}
 F(z_{k+1})
 &=F(z_k+r_k+b_k)\\
 &\leq F(z_k+r_k)-q_k^Tb_k+\tfrac12\norm{b_k}^2\\
 &\leq F(z_k)-q_k^Tr_k-\tfrac12\norm{q_k}^2.
 \end{aligned}
\]
Since $r_{k+1}=r_k+q_k$, we obtain
\[
 F(z_{k+1})+\tfrac12\norm{r_{k+1}}^2
 \leq F(z_k)+\tfrac12\norm{r_k}^2.
\]
The initial value is $F(z_0)+\tfrac12\norm{r_0}^2=\log n$,
proving \eqref{eq:potential}.

If $0\in\conv(S)$, then for every $z$ at least one
$v_i^Tz$ is nonpositive. Hence $F(z)\geq0$, and
\eqref{eq:potential} implies
$\norm{r_k}\leq\sqrt{2\log n}$.
Dividing by $k$ proves \eqref{eq:weighted-rate}.

For an arbitrary input, suppose the method has not stopped
before update $L$. If $\norm{r_L}/L\leq\varepsilon$, it returns
the approximate convex combination.
Otherwise,
$\norm{r_L}>L\varepsilon\geq\sqrt{2\log n}$,
so \eqref{eq:potential} implies $F(z_L)<0$.
Therefore
$\sum_{i=1}^n e^{-v_i^Tz_L}<1$,
which forces $v_i^Tz_L>0$ for every $i$.
Thus $z_L$ passes the strict-separation test.
\end{proof}

\paragraph{Implementation and work per update.}
Retain $\zeta_k=V^Tz_k$ and $b_k$ from one iteration to the next.
Two products by $V^T$ give
$g_k=V^Tr_k$ and $h_k=V^Tb_k$. The scores $\zeta_k+g_k$ give
$w(z_k+r_k)$, and one product by $V$ gives $q_k$.
The updated scores $\zeta_{k+1}=\zeta_k+g_k+h_k$ supply the separator
test and $w(z_{k+1})$; a second product by $V$ gives $b_{k+1}$.
Thus an update uses four full products and $O(m+n)$ additional
work, including coefficient accumulation, with $O(m+n)$ auxiliary
storage. Subtract the largest exponent before evaluating normalized
exponentials.

\section[Entropic Nesterov acceleration for spherical convex hull membership]{Entropic Nesterov acceleration for spherical\\convex hull membership}
\label{sec:comparison}
The Nesterov method used here specializes \cite{Nesterov2005} to
$f(x)=\tfrac12\norm{Vx}^2$ on $\Delta_n$,
using the $\ell_1$ norm and entropy
$d(x)=\log n+\sum_i x_i\log x_i$, with $0\log0=0$.
This section gives the implemented updates and their membership
bound; the supporting derivations are in
Appendices~\ref{app:nesterov-estimate} and~\ref{app:hard}.

\subsection{Algorithm and membership bound}
Initialize $x_0=\mathbf{1}/n$ and $G_{-1}=0$. For $k\geq0$, compute
\begin{align}
 g_k&=V^TVx_k,\nonumber\\
 y_k&\in\argmin_{y\in\Delta_n}
 \{g_k^T(y-x_k)+\tfrac12\norm{y-x_k}_1^2\},
 \label{eq:nesterov-hard}\\
 G_k&=G_{k-1}+\frac{k+1}{2}g_k,\qquad
 (z_k)_i=\frac{e^{-(G_k)_i}}{\sum_j e^{-(G_k)_j}},
 \label{eq:nesterov-soft}\\
 x_{k+1}&=\frac{k+1}{k+3}y_k+\frac2{k+3}z_k.
 \label{eq:nesterov-coupling}
\end{align}
Here $x_k,y_k,z_k$ are Nesterov simplex iterates, distinct from
the Nemirovski iterates. The update for $y_k$ transfers weight from large-score
coordinates to a minimum-score coordinate. Sorting the donors
and minimizing a piecewise quadratic gives the exact update;
Appendix~\ref{app:hard} supplies the procedure and proof.

In the implementation, the nearest point to the origin on
$[Vy_k,Vz_k]$ is also offered as an output, with its explicit
convex coefficients. Return an output when its norm is at most
$\varepsilon$, or a candidate direction $u$ when a full score test verifies
$\min_i v_i^Tu>0$. The segment refinement changes only the output,
not the recurrence~\eqref{eq:nesterov-coupling}.

The unit-column assumption implies
\begin{equation}\label{eq:l1-smoothness}
 \norm{\nabla f(x)-\nabla f(y)}_\infty
 \leq\norm{V(x-y)}_2\leq\norm{x-y}_1.
\end{equation}
Entropy has minimum zero at $\mathbf{1}/n$, is bounded above by
$\log n$, and is $1$-strongly convex in $\ell_1$:
in the interior, $h^T\nabla^2d(x)h=\sum_i h_i^2/x_i
\geq(\sum_i|h_i|)^2$; integration and continuity extend this to
the boundary.
For the iterates \eqref{eq:nesterov-hard}--\eqref{eq:nesterov-coupling}
and any minimizer $x^*$ of $f$ on $\Delta_n$, the convergence
estimate is
\[
 f(y_k)-f(x^*)\leq
 \frac{4d(x^*)}{(k+1)(k+2)}
 \leq\frac{4\log n}{(k+1)(k+2)}.
\]
A self-contained proof is given in Appendix~\ref{app:nesterov-estimate}.
On membership inputs $f(x^*)=0$, so this estimate gives
\begin{equation}\label{eq:nesterov-bound}
 \norm{Vy_k}\leq\sqrt{\frac{8\log n}{(k+1)(k+2)}}.
\end{equation}
Thus $O(\sqrt{\log n}/\varepsilon)$ iterations suffice, and the output
refinement preserves this bound.

\subsection{Separation and computational cost}
In the computational experiments, Nesterov's method also tests
candidate directions for strict separation. The membership estimate
above does not establish an unconditional accelerated iteration bound
for these separation checks.
The same $O(\sqrt{\log n}/\varepsilon)$ membership iteration bound can
also be deduced from the smoothed-perceptron estimates of
Ramdas and Pe\~na \cite[Lemmas~4--5]{RamdasPena2014}.

Including output refinement, the entropic implementation uses
three full input products and $O(n\log n)$ sorting work per
iteration, for $O(mn+n\log n)$ arithmetic and $O(m+n)$ auxiliary
storage. NWP uses four full products and $O(mn)$ arithmetic.
For $m\geq\log n$ the per-iteration costs have the same asymptotic
order.

\section{Safeguarding STA with WMC and NWP}
\label{sec:hybrid}
Ordinary STA remains the main procedure. After a slow block of $B$
steps, the hybrid first attempts WMC and activates the NWP safeguard
only if progress remains insufficient. Whereas STA moves toward one
vertex at a time, WMC retains and recombines a small active collection.
In the three-point example, combining the two alternating pivots
finds the origin immediately. WMC uses Wolfe's affine minimization
and deletion procedure \cite{Wolfe1976}, with the segment
initialization attributed to Betke \cite{Betke2004} in
\cite[Section~2.1]{DadushVeghZambelli2018}.

\subsection{WMC with retained state}
At the beginning of each WMC scan, maintain affinely independent
input points $a_j=v_{i_j}$, $j=1,\ldots,r$, whose affine
minimum-norm point belongs to their convex hull. Writing
$A=[a_1,\ldots,a_r]$, this point therefore has the representation
$a=A\lambda$ with $\lambda\in\Delta_r$.
This property holds initially by taking $A$ to consist of the first
STA pivot alone and $\lambda=(1)$. After a new point is inserted,
the affine minimum-norm coefficients may have negative entries;
the correction and deletion steps below restore the stated property.
After removing zero-weight points, all coefficients are positive;
following Wolfe, the resulting active set is a \emph{corral}.
Since $a$ already minimizes the norm over its affine hull,
further progress requires a new input point.
Retain the corral between scans; intervening STA pivots are not
automatically collected.

The following operation is one WMC scan.
\begin{enumerate}
\item At the corral point $a$, return its convex coefficients if
      $\norm a\leq\varepsilon$. Otherwise, select $v_j$ minimizing $a^Tv_i$
      over all input points. If $a^Tv_j>0$, return $a$ as a strict
      separating direction. Either return terminates the whole
      algorithm. If neither test succeeds, continue to Step~2.
\item Append $v_j$ and initialize feasible coefficients
at the point on the segment $[a,v_j]$ nearest to the origin.
\item Compute the affine minimum-norm coefficients $\alpha$ of the
      enlarged active set. If they are nonnegative, accept them and
      remove all zero-weight points.
      Otherwise move the feasible coefficients toward $\alpha$ until
      the first negative-target coordinate reaches zero; delete that
      coordinate and repeat this step on the reduced set.
\end{enumerate}
If neither stopping test in Step~1 succeeds, then
$a^Tv_j\leq0<\norm a^2$, so the selected vertex strictly improves
the corral point. No further optimality test is needed.
The stopping test for separation uses all input points.

\begin{samepage}
\begin{lemma}[Validity and amortized WMC cost]\label{lem:corral}
WMC preserves feasibility and does not increase the norm.
Every appended input point is affinely independent of the current
corral. Over $J$ corrective scans, there are at most $J$ insertions,
$J$ deletions, and $2J$ affine minimizations. Using the updated
matrix factorization described below, the total arithmetic cost
is $O(Jmn)$.
\end{lemma}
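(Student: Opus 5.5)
Feasibility and monotonicity come first; they follow from the structure of Wolfe's minor cycle. At the start of a scan the corral point $a=A\lambda$ has $\lambda\in\Delta_r$. Step~2 replaces it by the nearest point on $[a,v_j]$, a convex combination of feasible points. Because $a^Tv_j\le 0<\norm a^2$, the segment update \eqref{eq:sta} with $p=a$ uses $t\in(0,1)$ and strictly decreases the norm; this is Lemma~\ref{lem:progress} with $s\ge0$. In Step~3 the iterate moves along the segment from the current feasible coefficients $\mu$ toward the affine minimizer $\alpha$. We stop at the first coordinate where $\alpha$ is negative and $\mu$ reaches zero, so all coefficients stay nonnegative and sum to one. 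The norm $\norm{A(\cdot)}^2$ is convex on that segment and minimized at its endpoint $\alpha$ over the affine hull, so it is nonincreasing along the segment. Deleting a zero-weight point does not change the point. Hence feasibility holds and the norm never increases.

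Affine independence of the appended point follows from the corral property. The corral point $a$ minimizes the norm over $\operatorname{aff}(A)$, so $a^T(y-a)=0$ for all $y\in\operatorname{aff}(A)$, that is, $a^Ty=\norm a^2>0$. The appended $v_j$ satisfies $a^Tv_j\le0$, so $v_j\notin\operatorname{aff}(A)$. Deletions keep a subset of an affinely independent set, so independence persists through Step~3.

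For the counting, each scan inserts exactly one point, which gives at most $J$ insertions. The corral is retained between scans and begins with one point, so deletions never exceed insertions plus one. Since the corral never becomes empty, the total number of deletions over $J$ scans is at most $J$. Each pass of Step~3 performs one affine minimization and then either accepts or deletes. Accepting happens at most once per scan, so the number of minimizations is at most (insertions) $+$ (deletions) $\le 2J$. The zero-weight removal after acceptance is counted among the deletions.

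The main obstacle is the arithmetic bound $O(Jmn)$, which relies on the ``updated matrix factorization described below.'' I would maintain a Cholesky or QR factor of the $r\times r$ Gram system underlying the affine minimization, for example of $[\mathbf 1^T; A]$ or of $A^TA$ bordered by ones. Then an insertion costs $O(mr+r^2)$: one product $A^Tv_j$ and a triangular solve. A deletion is a rank-one downdate, restored by Givens rotations in $O(r^2)$. Each affine minimization is a pair of triangular solves in $O(r^2)$. Affine independence gives $r\le m+1$, and $r\le n$ since the points are distinct input vertices, so $r^2\le (m+1)n=O(mn)$. Each scan's score scan $V^Ta$ costs $O(mn)$, and forming $a=A\lambda$ costs $O(mr)$. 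Summing over $J$ scans, with $O(J)$ insertions, deletions and minimizations each costing $O(mn)$, gives $O(Jmn)$. The delicate point is confirming that the described factorization actually supports $O(r^2)$ downdates without refactoring. If it instead refactors in $O(r^3)$, I would fall back on $r^3\le r^2(m+1)$, which breaks the bound only when $r$ is large relative to $n$; that is why I expect to need the update-and-downdate form.
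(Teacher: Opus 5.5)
Your proof is correct and follows the paper's own argument essentially step for step. It uses the identity $a^Ta_i=\norm a^2$ on the corral's affine hull for independence, the nonincreasing move toward the affine minimizer with zero-weight deletion, $D\le I\le J$ because the corral never empties, affine minimizations bounded by insertions plus deletions, and $O(mn)$ per-operation costs from $r\le\min(m+1,n)$. The paper settles your ``delicate point'' as you hoped: it updates, rather than recomputes, a QR factorization of the augmented matrix with columns $(a_i,1)$ on each insertion or deletion, and it obtains the affine minimizer from two triangular solves, so your refactoring fallback (which would not give $O(Jmn)$ in general) is never needed.
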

\end{samepage}

\begin{proof}
At the affine minimum-norm point $a$, every point $a_i$ of the
corral satisfies $a^Ta_i=\norm{a}^2$. Thus every point in its
affine hull has the same inner product with $a$. An input point
$v_j$ satisfying $a^Tv_j<\norm{a}^2$ lies outside this affine hull.
The enlarged active set is therefore affinely independent and
has size at most $\min(m+1,n)$.

The initialization at the point of $[a,v_j]$ nearest to the
origin is feasible and reduces the norm. The affine minimizer
of any current active set minimizes squared norm over its
entire affine hull. If its coefficients are nonnegative,
the correction is complete. Otherwise, moving the current
coefficients toward the affine-minimizer coefficients does not
increase squared norm until a coefficient first reaches zero.
The move can have zero length when a current zero coefficient has
a negative target. Simultaneous zero coefficients may be deleted
together or in successive zero-length moves.
Deleting that point preserves feasibility and the represented
point. Each unsuccessful affine minimization therefore removes
a point, so this process terminates.

Starting with one point and retaining at least one, the total
number of deletions cannot exceed the number of insertions.
Each insertion requires an affine minimization, and each
deletion requires at most one further affine minimization.
This proves the stated counts.

To perform these minimizations efficiently, maintain a QR
factorization of the matrix of augmented columns:
\[
M=
\begin{bmatrix}
a_1 & \cdots & a_r\\
1   & \cdots & 1
\end{bmatrix}
=QR.
\]
Here $Q$ has $r$ orthonormal columns and $R$ is an invertible
$r\times r$ upper-triangular matrix. Compute
\[
 R^Tt=\mathbf{1},\qquad Rh=t,\qquad
 \alpha=\frac{h}{t^Tt}.
\]
Indeed, $M^TM=A^TA+\mathbf{1}\mathbf{1}^T$ and
$\mathbf{1}^Th=t^Tt>0$. On $\mathbf{1}^T\alpha=1$, minimizing
$\norm{M\alpha}^2$ is equivalent to minimizing $\norm{A\alpha}^2$,
so the displayed vector is the affine minimizer.
Update the factorization whenever a point is inserted
or deleted, rather than recomputing it from the beginning.
Each update costs $O(mr)$ operations, and the triangular
solves cost $O(r^2)$.

Since $r\leq\min(m+1,n)$, both costs are $O(mn)$.
Reconstructing coefficients with respect to the original input
points costs at most $O(nr)=O(mn)$, and each full corrective
score scan also costs $O(mn)$. Summing over the scans,
insertions, deletions, and affine minimizations gives
the total cost $O(Jmn)$.
\end{proof}

\subsection{Combining STA, WMC, and the NWP safeguard}
Fix a block size $B\geq1$, independently of $\varepsilon,m,n$.
The experiments use $B=8$.
Start from any $x\in\Delta_n$ and $p=Vx$.
Keep the WMC corral and the NWP sequence as separate persistent states.
Repeat the following cycle, stopping immediately at an approximate
answer or a strict separator.
\begin{enumerate}
\item Record $d_0=\norm p$ and take $B$ ordinary deepest-pivot STA
      steps, unless a stopping test succeeds earlier.
\item If $1/\norm p-1/d_0<B$, allow up to $B$ WMC
      scans as above. Accept a corral point whenever its norm is smaller
      than $\norm p$. End this correction phase early if the current
      residual is at most half its value before the phase.
\item If the total reciprocal increase is still less than $B$,
      take $B$ NWP updates \eqref{eq:weighted-update}, stopping
      earlier on accuracy or separation. Initialize $z=r=c=0$ only
      on the first activation. Check the cumulative average after
      every NWP update.
\item After each NWP update in the preceding step, the cumulative
      average replaces the main point if it halves its norm.
      Always accept an average meeting $\varepsilon$, even if the halving
      criterion fails. Resume ordinary STA after the NWP phase.
\end{enumerate}
The original coefficients accompany every accepted point. The corral
is not reinitialized from the accepted point, and the NWP state
is neither restarted nor overwritten. These provisions preserve both
the WMC cost accounting and the potential proof.

\begin{theorem}[Complexity of the hybrid algorithm for
approximate membership or separation]\label{thm:hybrid}
The specified hybrid returns an approximate convex combination or a
strict separator. Let $K$ be the number of NWP updates and $N$
the number of ordinary STA steps. With $L$ from \eqref{eq:L},
\begin{equation}\label{eq:hybrid-count}
 K\leq L,\qquad N<\frac1{\varepsilon}+K+2B.
\end{equation}
There are at most $N$ corrective scans and at most $2N$ affine solves.
For fixed $B$ the total number of full input products is
$O(\sqrt{\log n}/\varepsilon)$ and the total arithmetic work is
\begin{equation}\label{eq:hybrid-work}
 O\!\left(mn\frac{\sqrt{\log n}}\varepsilon\right).
\end{equation}
Working storage, including the input and WMC, is
$O(mn)$. These bounds apply whether the origin lies inside, on
the boundary of, or outside the convex hull.
\end{theorem}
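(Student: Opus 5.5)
The plan is to track the reciprocal residual $1/\norm p$ of the main point across the hybrid's cycles and to bound the two kinds of work separately. The NWP updates are bounded by Theorem~\ref{thm:weighted}, because the NWP state is persistent. The STA steps are bounded by amortizing over blocks. Each cycle has three possible outcomes. Either the $B$ STA steps raise $1/\norm p$ by at least $B$; or a WMC phase halves the residual or closes the reciprocal deficit; or NWP is activated. The main point's norm never increases. STA steps do not increase it by Lemma~\ref{lem:progress}, WMC acceptances require a smaller norm, and NWP averages are accepted only on halving or on meeting $\varepsilon$.

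\textbf{NWP count.} The NWP iterates $(z_k,r_k,c_k)$ are never restarted or overwritten. The $K$ NWP updates performed across all activations are therefore the first $K$ updates of a single NWP run. The potential inequality~\eqref{eq:potential} applies to this run verbatim. The argument in the proof of Theorem~\ref{thm:weighted} then shows the following. At update $L$, either $\norm{r_L}/L\le\varepsilon$, in which case the average is accepted, since an average meeting $\varepsilon$ is always accepted; or $F(z_L)<0$, in which case the separation test succeeds. Either outcome terminates the hybrid, so $K\le L$.

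\textbf{STA count.} Count the cycles in which fewer than $B$ STA steps occur. Only the final, terminating cycle can be of this kind, which contributes at most $B$ steps. Now classify the full cycles.
\begin{itemize}
\item In cycles where the test in Step~2 fails, $1/\norm p$ rises by at least $B$ during the $B$ STA steps.
\item In cycles where the WMC phase or the NWP phase achieves a total reciprocal increase of at least $B$, the same holds.
\item The remaining cycles invoke NWP. I would charge each such cycle either to a halving event or to the $B$ NWP updates it performs; the latter contributes at most $K$ STA steps in total, as $B$ steps per $B$ NWP updates.
\end{itemize}
Halving events are the delicate part. A halving at residual $\rho>\varepsilon$ raises $1/\norm p$ by at least $1/\rho$. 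Halvings also cannot be sparse in reciprocal gain. I would argue that every cycle not charged to NWP raises the reciprocal by at least $B$. A WMC halving from $\rho$ either gains at least $B$ or, if $1/\rho<B$, gains at least $1/\rho$. In the latter case the NWP step is then entered, so the cycle is charged to NWP anyway.

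Summing, the reciprocal starts at $\ge 1$ and must stay $\le 1/\varepsilon$ before termination. The number of uncharged full cycles is therefore less than $1/(B\varepsilon)$, contributing fewer than $1/\varepsilon$ STA steps. Adding $K$ for the NWP-charged cycles and $2B$ for the boundary cycles (the first or incomplete one and the last) gives $N<1/\varepsilon+K+2B$.

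\textbf{Work.} At most $B$ WMC scans occur per cycle, and there are at most $\lceil N/B\rceil+1$ cycles. This gives at most about $N$ scans; the constant $2B$ slack absorbs the rounding. By Lemma~\ref{lem:corral}, these scans involve at most $2N$ affine solves at total cost $O(Nmn)$. Each STA step costs $O(mn)$, one score scan. Each NWP update costs four products by Theorem~\ref{thm:weighted}. With $N,K=O(\sqrt{\log n}/\varepsilon)$ for fixed $B$, this yields~\eqref{eq:hybrid-work}. Storage is the input, the $O(m\min(m+1,n))=O(mn)$ QR factor, and $O(m+n)$ vectors. Correctness of outputs follows because every returned point carries convex coefficients and every separator is verified by a full score test. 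None of these arguments uses the location of the origin.

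\textbf{Main obstacle.} The hardest step is the cycle charging in the STA count. I must show that a cycle which neither gains $B$ in reciprocal nor performs NWP updates cannot exist, and that an NWP-invoking cycle is genuinely charged to $B$ NWP updates. An NWP phase that stops early does so only at termination, so it is fine. I would verify carefully that Step~3 always performs exactly $B$ updates unless the algorithm terminates.
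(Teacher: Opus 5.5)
Your proposal is correct and follows the paper's proof. Blocks that raise $1/\norm p$ by at least $B$ account for fewer than $1/\varepsilon$ STA steps. By the Step~3 test these are exactly the blocks that do not enter NWP, so your detour through halving events is unnecessary. Every other nonterminal block is charged to $B$ updates of the single persistent NWP sequence, which Theorem~\ref{thm:weighted} caps at $L$. The terminal or incomplete block is absorbed in the $2B$ slack.

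The one slip is the count of corrective scans. The $2B$ slack belongs to the bound on $N$ and cannot turn $B(\lceil N/B\rceil+1)$ into the claimed ``at most $N$.'' Instead, use the paper's observation: a WMC phase is entered only after a complete block of $B$ STA steps. The at most $B$ scans of each phase therefore match distinct preceding STA steps, giving at most $N$ scans and, by Lemma~\ref{lem:corral}, at most $2N$ affine solves.
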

\begin{proof}
The main residual never increases. Call a completed nonterminal block
\emph{good} if its STA steps and WMC correction phase increase the reciprocal
residual by at least $B$. Before termination $1/\norm p<1/\varepsilon$.
Summing the increases over all good blocks shows that they account for
fewer than $1/\varepsilon$ ordinary steps.

Every other complete nonterminal block advances the NWP sequence
by $B$ updates. Its total number of updates is at most $L$, since
Theorem~\ref{thm:weighted} applies to that persistent sequence
regardless of intervening STA steps: its potential depends only on
its own unchanged state and update count. Charging these blocks to NWP
updates and allowing a terminal block and a final incomplete block
gives \eqref{eq:hybrid-count}. The same argument rules out an infinite
run: there can be neither infinitely many good blocks nor infinitely
many NWP activations.

At most $B$ corrective scans follow a complete block of $B$ ordinary
steps, so their number is at most $N$. Lemma~\ref{lem:corral} includes
their minor-loop work. Ordinary steps cost $O(Nmn)$; NWP updates
cost $O(Kmn)$. Thus the work is
$O(mn(1/(\varepsilon)+L+B))$, which is \eqref{eq:hybrid-work} for fixed
$B$ and $n\geq2$. The full product count obeys the same order.
The active set size bound and the stored current vectors give the
storage claim.
\end{proof}

A failed progress test forces NWP to advance, so the theorem does
not require WMC to become effective; omitting WMC preserves the
argument.
By Lemma~\ref{lem:reciprocal}, an STA block with
$\sum a_k\geq2B$ passes the reciprocal test. Accepted WMC points
change the trajectory, so the cumulative condition for a separate
ordinary STA run need not hold for the hybrid.

\section{STA--WMC with remembered pivots}
\label{sec:modified}

STA--WMC combines STA with persistent WMC. Unlike the hybrid, it
tries remembered STA pivots before a full corrective scan, adds a
relative-progress trigger, and omits the NWP safeguard.
This variant uses STA's pivot history to guide geometric correction.
The experiments assess the combined effect of pivot reuse and the
correction schedule.

\subsection{Algorithm}

Let $C\subseteq S$ store every distinct pivot selected by ordinary STA.
When a correction searches all of $S$ and selects a point minimizing
$a^Tv_i$, add that point to $C$ as well, even if it was not previously
chosen as an STA pivot.
Maintain a separate corral point $a=A\lambda$ using the WMC
procedure of Section~\ref{sec:hybrid}. The candidate pool $C$
can therefore be larger than the active corral.

Fix a block size $B\geq1$ and a relative-progress threshold
$\tau\in(0,1)$. The experiments use $B=8$ and $\tau=0.01$.
Initialize $C=\varnothing$, $x=\mathbf{1}/n$, and $p=Vx$.
Repeat the following cycle, stopping at an approximate answer or
a verified full-set separator.
\begin{enumerate}
\item Record $d_0=\norm p$ and perform $B$ deepest-pivot STA steps,
      collecting each new pivot in $C$. Stop early if a stopping
      test succeeds. Write $d=\norm p$ after the block.

\item Activate a WMC correction phase if
\begin{equation}\label{eq:modified-trigger}
       d^{-1}-d_0^{-1}<B
       \quad\hbox{or}\quad
       d>(1-\tau)d_0.
\end{equation}
      For $\tau=0.01$, the second condition detects less than one
      percent relative residual reduction over the block.
      If neither condition holds, resume ordinary STA.
      On the first activation, initialize the corral at the
      first remembered STA pivot.

\item During an activated correction phase, allow up to $B$ corrective
      insertion attempts. At each attempt, first offer the current
      corral point $a$ to the main iterate as described below,
      stopping if a stopping test succeeds.
      Choose $v_j\in C$ minimizing $a^Tv_j$.
      If no candidate in $C$ satisfies $a^Tv_j<\norm a^2$, scan the
      full set $S$ instead and add a minimum-score point to $C$.
      Apply the WMC insertion, affine minimization, and coefficient
      deletion procedure to an improving vertex; end the correction phase
      if none is accepted.
      Offer the resulting corral point to the main iterate and
      continue unless a stopping test or the correction-phase limit is reached.

\item Retain both $C$ and the corral, and resume ordinary STA.
\end{enumerate}

To offer $a\in\conv(S)$, compute the nearest point $q$ to the
origin on $[p,a]$. If $q=(1-t)p+ta$ has smaller norm than $p$,
replace $p$ by $q$ and update its original convex coefficients
by the same combination. This preserves feasibility and never
increases the main residual.

A separator for $\conv(C)$ alone is not a global certificate;
a reported separator must pass a full-set test. The implementation
checks the main point at its full STA scans. It stores corral
vertex indices and weights, rather than a dense matrix of their
original coefficient vectors.

\subsection{Progress and complexity}

At small residuals a block can pass the reciprocal-progress test
while making little relative progress. The additional trigger
detects this behavior but by itself supplies no unconditional accelerated guarantee.

In exact arithmetic, STA--WMC retains the ordinary quadratic
bound for approximation or separation on any input.
Each ordinary STA step satisfies the squared-residual decrease in
Proposition~\ref{prop:quadratic}, and accepted WMC refinements
never increase the residual. Thus, within $O(\varepsilon^{-2})$ ordinary
STA steps, the algorithm returns an approximate point or a separator
verified on the full input.
There are at most $B$ correction attempts per block of $B$
ordinary steps. Each candidate scan costs at most $O(mn)$,
since $C$ contains at most $n$ input points. Because the corral
is retained, Lemma~\ref{lem:corral} also bounds the total
affine minimization and factorization work over all insertion
attempts. The resulting total arithmetic bound is
$O(mn/\varepsilon^2)$.
Section~\ref{sec:computations} evaluates the combined effect of
pivot memory, WMC, and the additional trigger, rather than pivot
reuse alone.

\begin{corollary}[Conditional complexity of STA--WMC]
\label{cor:modified-conditional}
Suppose $0\in\conv(S)$. Index only the ordinary STA steps of
STA--WMC by $k=0,1,\ldots$. Let $p_k$ be the current point
immediately before step $k$, including any corrections since the
preceding ordinary step, and let $v_{i_k}$ be its chosen pivot.
Put
\[
 \delta_k=\norm{p_k},\qquad
 s_k=-p_k^Tv_{i_k}/\delta_k,\qquad
 a_k=s_k^2/\delta_k.
\]
Suppose that, for constants $\gamma>0$ and $N_0\geq1$ independent
of $\varepsilon$,
\[
 \frac1N\sum_{k=0}^{N-1}a_k\geq\gamma,\qquad N\geq N_0,
\]
as long as the algorithm has not terminated. Then
$O(N_0+1/(\gamma\varepsilon))$ ordinary STA steps suffice, and the total
arithmetic work, including corrections, is
$O(mn(N_0+1/(\gamma\varepsilon)))$.
For fixed $N_0$ and $\gamma$, these bounds are $O(1/\varepsilon)$
steps and $O(mn/\varepsilon)$ arithmetic operations.
\end{corollary}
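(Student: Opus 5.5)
The plan is to reduce the corollary to Lemma~\ref{lem:reciprocal} and Theorem~\ref{thm:amortized}. The key point is that WMC corrections can only help the reciprocal residual. Index the ordinary STA steps as in the statement. Write $p_k^+$ for the result of the ordinary update at $p_k$, so that $p_{k+1}$ is obtained from $p_k^+$ by zero or more accepted corral offers.

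The first step is a per-step inequality. Each offer replaces the main point by the nearest point of a segment only when this strictly decreases the norm, so $\norm{p_{k+1}}\leq\norm{p_k^+}$. Inequality~\eqref{eq:reciprocal} applies verbatim to the ordinary step at $p_k$, because it uses only the exact update formula \eqref{eq:residual} at the current point and its chosen strict pivot. This gives
\[
 \delta_{k+1}^{-1}\geq\norm{p_k^+}^{-1}\geq\delta_k^{-1}+\tfrac12 a_k.
\]
If some update or correction reaches zero, or reaches norm at most $\varepsilon$, the algorithm stops, and this only helps. Since $0\in\conv(S)$, a strict pivot exists at every nonzero point, so no separator stop occurs and every nonterminal point admits an ordinary step. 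Summing the inequality gives the analogue of \eqref{eq:finite-amortized} for the corrected trajectory. Inserting the averaged hypothesis then yields $\delta_N\leq(\delta_0^{-1}+\gamma N/2)^{-1}$ for $N\geq N_0$. Solving for $N$ exactly as in \eqref{eq:amortized-count} bounds the number of ordinary steps before $\delta\leq\varepsilon$ by $O(N_0+1/(\gamma\varepsilon))$.

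One point needs care here: termination can also occur inside a correction phase. Such a stop only shortens the run, so the count of ordinary steps still bounds the total number of cycles.

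Next comes the work accounting. Each ordinary step costs $O(mn)$, including its full score scan. Correction phases follow complete blocks of $B$ ordinary steps, with at most $B$ insertion attempts per block, so the number of attempts is at most the number $N$ of ordinary steps (plus $B$ for a terminal block). Each candidate scan over $C\subseteq S$ or over $S$ costs $O(mn)$, and so does each offer to the main iterate together with its coefficient update. The corral is retained, so Lemma~\ref{lem:corral} bounds all insertions, deletions, affine minimizations and factorization updates over $J\leq N+B$ scans by $O(Jmn)$. Adding these costs gives $O(mn(N_0+1/(\gamma\varepsilon)))$ for fixed $B$, which becomes $O(1/\varepsilon)$ steps and $O(mn/\varepsilon)$ arithmetic when $N_0$ and $\gamma$ are fixed.

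The main obstacle is the first step: checking that the hypothesis is stated on the corrected points $p_k$, not on a hypothetical pure STA run, and that Lemma~\ref{lem:reciprocal} needs nothing beyond the current point and its pivot. Once that is verified, monotonicity of accepted offers is the only additional fact required.
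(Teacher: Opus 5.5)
Your proposal is correct and takes essentially the same route as the paper. You bound the corrected residual by the uncorrected STA output, apply Lemma~\ref{lem:reciprocal} locally at $p_k$ with its chosen pivot, sum under the averaged hypothesis, and charge all correction work to the ordinary steps via Lemma~\ref{lem:corral}. Your $+B$ slack in the attempt count is harmless but unnecessary: correction phases follow only complete blocks, so there are at most $N$ attempts.
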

\begin{proof}
Let $\widehat p_{k+1}$ be the point obtained by the ordinary STA
step at $p_k$. If the algorithm has not terminated, accepted
corrections give $\delta_{k+1}\leq\norm{\widehat p_{k+1}}$.
Lemma~\ref{lem:reciprocal} therefore yields
\[
 \delta_{k+1}^{-1}-\delta_k^{-1}
 \geq\norm{\widehat p_{k+1}}^{-1}-\delta_k^{-1}
 \geq\tfrac12 a_k.
\]
Summing and using the averaged condition gives
$\delta_N\leq(\delta_0^{-1}+\gamma N/2)^{-1}$ for $N\geq N_0$,
unless the algorithm has already terminated. This proves the
step bound. There are at most as many correction attempts as
ordinary steps; Lemma~\ref{lem:corral} includes all affine
minimizations and factorization updates, proving the work bound.
\end{proof}
The condition in Corollary~\ref{cor:modified-conditional} concerns
the actual STA--WMC trajectory. It is not inferred from a separate
ordinary STA run, because corrections can change subsequent
iterates and pivot choices.

\subsection{Comparison of complexity guarantees}
\label{sec:rates}
Table~\ref{tab:guarantees} summarizes the worst-case arithmetic bounds
proved here. Under the averaged progress condition along each
method's own trajectory, STA and STA--WMC additionally require
$O(N_0+1/(\gamma\varepsilon))$ ordinary steps and
$O(mn(N_0+1/(\gamma\varepsilon)))$ arithmetic work
(Theorem~\ref{thm:amortized} and
Corollary~\ref{cor:modified-conditional}). In particular, fixed
$N_0$ and $\gamma$ give conditional $O(1/\varepsilon)$ and $O(mn/\varepsilon)$
bounds, respectively. The Nesterov certificate distinction is
discussed in Section~\ref{sec:comparison}.

\begin{table}[htbp]
\centering\small
\caption{Arithmetic work to obtain an approximate point or a strict
separator. The bounds for STA, STA--WMC, NWP, and the hybrid apply to
all inputs. The stated Nesterov bound assumes $0\in\conv(S)$.
The hybrid and STA--WMC use fixed $B$.}
\label{tab:guarantees}
\begin{tabular}{lll}
\toprule
Method & Scope of bound & Arithmetic work\\
\midrule
STA & All inputs & $O(mn/\varepsilon^2)$\\
STA--WMC & All inputs & $O(mn/\varepsilon^2)$\\
Nesterov & $0\in\conv(S)$ & $O((mn+n\log n)\sqrt{\log n}/\varepsilon)$\\
NWP & All inputs & $O(mn\sqrt{\log n}/\varepsilon)$\\
Hybrid & All inputs & $O(mn\sqrt{\log n}/\varepsilon)$\\
\bottomrule
\end{tabular}
\end{table}

The hybrid preserves the NWP guarantee while permitting STA and WMC
progress. The experiments assess the computational value of that
combination relative to ordinary STA, NWP, Nesterov, and STA--WMC.

\section{Computational comparisons}
\label{sec:computations}
We compare STA, the hybrid, entropic Nesterov acceleration, NWP,
and STA--WMC on convex-hull queries derived from 14 public
classification datasets \cite{LIBSVMData}. For each dataset, we test
whether the convex hulls of its two labeled groups intersect or
admit a strict separator. The spherical input consists of normalized
signed augmented columns $(a_i,1)$ and $(-b_j,-1)$.
We use the full supplied training files and feature representations;
these are geometric feasibility tests, not prediction-accuracy tests.

Each method was run three times on each public-data query, with
tolerance $\varepsilon=10^{-6}$ and a nominal five-second allowance per run,
giving 210 runs in total. Methods were run one at a time, with their
order varied between repetitions. Reported times include initialization,
updates, and stopping tests, but exclude data preparation and independent
verification of the answers. All methods used the same computing
environment; the times reflect these implementations and this environment.
The hybrid and STA--WMC use $B=8$, and STA--WMC uses $\tau=0.01$.
The three benchmark batches were selected before their respective timed
outcomes; all their public classification datasets are reported here.

\begin{table}[htbp]
\centering\small\setlength{\tabcolsep}{3pt}
\caption{Public-data benchmarks: median seconds over three runs.
$m$ includes the augmented coordinate; $n$ is the number of observations.
Superscript S denotes separation; other successful outputs are approximate
convex combinations. A dash means no successful run; $*$ marks the sole successful run
(4.817 seconds), with the other two runs timing out at five seconds.
The final row counts datasets completed in all three runs.}
\label{tab:benchmarks}
\begin{tabular}{llrrrrr}
\toprule
Dataset & $m\times n$ & STA & Hybrid & Nesterov & NWP & STA--WMC\\
\midrule
Australian credit & $15\times690$ & 1.09 & 0.00665 & 0.335 & 2.16 & 0.00638\\
Adult (a9a) & $124\times32561$ & -- & 2.27 & -- & -- & 2.97\\
Heart & $14\times270$ & 0.0132 & 0.0123 & 0.0619 & 0.237 & 0.0176\\
Sonar$^{\mathrm S}$ & $61\times208$ & 0.164 & 0.0288 & 0.344 & 0.112 & 0.0288\\
Ionosphere & $35\times351$ & -- & 0.0608 & 0.554 & -- & 0.0119\\
German credit & $25\times1000$ & 0.0197 & 0.0204 & 0.145 & 0.602 & 0.0258\\
Diabetes & $9\times768$ & 0.00182 & 0.00193 & 0.038 & 0.253 & 0.0027\\
w8a & $301\times49749$ & -- & 0.112 & -- & -- & 0.0984\\
IJCNN1 & $23\times35000$ & 0.142 & 0.128 & 2.33 & -- & 0.134\\
cod-rna & $9\times59535$ & -- & 0.0333 & -- & -- & 0.0187\\
svmguide1 & $5\times3089$ & 3.55 & 0.0028 & 4.817$^{*}$ & 2.61 & 0.00213\\
svmguide3 & $23\times1243$ & 2.3 & 0.0427 & 1.49 & -- & 0.0231\\
Splice & $61\times1000$ & 3.28 & 1.61 & 0.773 & 2.06 & 0.0417\\
Colon cancer$^{\mathrm S}$ & $2001\times62$ & 0.000781 & 0.00556 & 0.00506 & 0.00512 & 0.00824\\
\midrule
Completed & & 10/14 & 14/14 & 10/14 & 8/14 & 14/14\\
\bottomrule
\end{tabular}
\end{table}

The hybrid and STA--WMC completed every public-data case in
all repetitions. STA and Nesterov completed ten consistently, and
NWP eight; Nesterov also completed \texttt{svmguide1} in one repetition
near the deadline. Every returned certificate passed independent
reconstruction or full-input separation checks. Adult, w8a, and
cod-rna contain identical feature
vectors with opposite labels, so their size need not imply difficulty.

STA is fastest among the five methods on the diabetes and
colon-cancer cases. The hybrid and STA--WMC substantially improve
completion on several cases where ordinary STA does not finish.
Their relative times vary: on Adult the medians are 2.27 seconds
for the hybrid and 2.97 for STA--WMC, whereas on Splice they are
1.61 and 0.0417 seconds. Nesterov takes 0.773 seconds on Splice.
Thus the hybrid's unconditional guarantee does not imply the
smallest measured runtime. On Sonar both the hybrid and STA--WMC
take approximately 0.029 seconds; small or overlapping timing
differences do not establish superiority.

\paragraph{Controlled geometric comparison.}
To examine geometry directly, Table~\ref{tab:overall} summarizes
204 synthetic inputs with known membership or separation certificates:
66 random membership, 72 boundary membership, and 66 separation cases.
The three classes are defined as follows.
\begin{itemize}
\item \emph{Random membership.} Generate vectors with independent
standard normal (Gaussian) coordinates and normalize them to unit length.
Assign positive lognormal weights, then append the unit vector opposite
their weighted sum. This construction guarantees that the origin is
in the convex hull.
\item \emph{Boundary membership.} Place unit points whose convex hull contains
the origin in the hyperplane with last coordinate zero, and add unit points with
positive last coordinate. The origin is therefore on the boundary of
the full hull. The points in the hyperplane are stretched in different
coordinate directions and renormalized; additional cases use a stretched simplex or a
collection contained in a lower-dimensional subspace.
\item \emph{Separation.} Starting from either membership construction
in one lower dimension, replace each unit point $u_i$ by
$(\sqrt{1-h^2}\,u_i,h)$. The resulting hull has distance exactly
$h>0$ from the origin. We use $h\in\{0.05,10^{-3},10^{-5}\}$
and tolerance $\varepsilon=h/10$, so these cases require a separator.
\end{itemize}
Here $m\leq1000$, $n\leq20001$, and tolerances reach $10^{-7}$.
Membership and separation allowances are 20 and 10 seconds,
respectively. These completion counts are distinct from the
five-second public-data study.

\begin{table}[htbp]
\centering
\caption{Completed synthetic cases within the study allowances.
Random membership is constructed to contain the origin; boundary
membership places it on a supporting face; separation places it outside
the hull at a prescribed positive distance.}
\label{tab:overall}
\small\setlength{\tabcolsep}{5pt}
\begin{tabular}{lrrrr}
\toprule
Method & \shortstack{Random\\membership} & \shortstack{Boundary\\membership} & Separation & Total\\
\midrule
STA & 66/66 & 0/72 & 44/66 & 110/204\\
Hybrid & 66/66 & 60/72 & 66/66 & 192/204\\
Nesterov & 66/66 & 26/72 & 55/66 & 147/204\\
NWP & 55/66 & 0/72 & 44/66 & 99/204\\
STA--WMC & 66/66 & 72/72 & 66/66 & 204/204\\
\bottomrule
\end{tabular}
\end{table}

Ordinary STA completes all random membership cases but none of the
boundary cases at the requested tolerances. STA--WMC completes
all 204; the hybrid completes 192, including every separation case.
In the five-method comparison, Nesterov is faster than the STA-based
methods on the highest-dimensional random configuration. These
controlled results support a geometric interpretation of performance;
the public-data tests alone do not identify distance from the boundary.

The synthetic experiments include separate timing sessions. We therefore
use their completion counts and broad timing patterns, rather than infer
rankings from small timing differences.

\section{Conclusions and practical guidance}
\label{sec:practice}
The cumulative opposite-depth analysis gives a conditional
$O(1/\varepsilon)$ regime for ordinary STA while allowing shallow pivots.
The same conditional rate holds for STA--WMC when the averaged
condition is satisfied along its actual sequence of ordinary steps;
the total work is $O(mn/\varepsilon)$ for fixed condition constants.
For ordinary STA, the three-point example establishes the sharp unconditional
$\Theta(\varepsilon^{-2})$ obstruction. Identifying boundary input classes
with uniform cumulative-depth constants remains an open direction.
The hybrid combines STA, WMC, and NWP with an unconditional
$O(mn\sqrt{\log n}/\varepsilon)$ work bound, including affine corrections.

The experiments show complementary strengths. STA is effective on
favorable interior and separation instances, while STA--WMC completes
all 204 controlled synthetic cases and all public-data cases.
Nesterov is faster than the STA-based methods on the highest-dimensional
random configuration. The hybrid completes all public-data cases and
retains its unconditional accelerated bound, but its runtime varies
and it does not complete every synthetic case within
budget. Practical comparisons should therefore report both absolute
time and completion, alongside the theoretical guarantee.

\appendix
\section{Supporting proofs for Nesterov acceleration}
\label{sec:synthetic-details}
For completeness, we give a self-contained proof of the Nesterov
convergence estimate for the spherical convex hull problem and explain
how its simplex update is computed. Throughout this appendix,
\[
 f(x)=\tfrac12\norm{Vx}^2,\qquad
 d(x)=\log n+\sum_{i=1}^n x_i\log x_i,\qquad x\in\Delta_n,
\]
with $0\log0=0$. We use the $\ell_1$ norm on the simplex.
The vectors $x_k,y_k,z_k$ are the Nesterov iterates
\eqref{eq:nesterov-hard}--\eqref{eq:nesterov-coupling}, distinct from
the Nemirovski iterates.

\subsection{Nesterov acceleration estimate}
\label{app:nesterov-estimate}
Since each column of $V$ has unit Euclidean norm,
$\norm{Vh}\leq\sum_i|h_i|=\norm h_1$. Expanding the quadratic gives
\begin{equation}\label{eq:nesterov-smooth-model}
 \begin{aligned}
 f(y)&=f(x)+\nabla f(x)^T(y-x)+\tfrac12\norm{V(y-x)}^2\\
     &\leq f(x)+\nabla f(x)^T(y-x)+\tfrac12\norm{y-x}_1^2.
 \end{aligned}
\end{equation}
Thus $1$ is a valid smoothness constant.
As shown in Section~\ref{sec:comparison}, entropy $d$ has minimum zero
at $x_0=\mathbf{1}/n$, is bounded above by $\log n$, and is
$1$-strongly convex in $\ell_1$.
Put $\omega_k=(k+1)/2$ and $A_k=\sum_{i=0}^k\omega_i=(k+1)(k+2)/4$.

\begin{proposition}\label{prop:nesterov-estimate}
For the Nesterov iterates in Section~\ref{sec:comparison} and any minimizer
$x^*$ of $f$ on $\Delta_n$,
\[
 f(y_k)-f(x^*)\leq\frac{d(x^*)}{A_k}
 \leq\frac{4\log n}{(k+1)(k+2)}.
\]
On membership inputs this gives the residual bound
\eqref{eq:nesterov-bound}.
\end{proposition}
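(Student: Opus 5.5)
The plan is the standard estimate-sequence argument for Nesterov's method in its dual-averaging form (Nesterov 2005), with the specific weights $\omega_k=(k+1)/2$ and the entropy prox-function $d$. Define the aggregated lower model
\[
 \psi_k(x)=d(x)+\sum_{i=0}^k\omega_i\bigl[f(x_i)+g_i^T(x-x_i)\bigr],
\]
whose minimizer over $\Delta_n$ is exactly $z_k$. This holds because the linear part has gradient $G_k=\sum_{i\le k}\omega_ig_i$, and the entropy-regularized linear minimization gives the softmax in \eqref{eq:nesterov-soft}. By convexity of $f$, $\psi_k(x^*)\le d(x^*)+A_kf(x^*)$. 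So it suffices to prove by induction the invariant
\begin{equation}\label{eq:inv-plan}
 A_kf(y_k)\le\min_{x\in\Delta_n}\psi_k(x)=\psi_k(z_k).
\end{equation}
Dividing by $A_k=(k+1)(k+2)/4$ and using $d(x^*)\le\log n$ then gives the estimate. On membership inputs $f(x^*)=0$, hence $\tfrac12\norm{Vy_k}^2\le 4\log n/((k+1)(k+2))$, which is \eqref{eq:nesterov-bound}.

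For the base case $k=0$: $\omega_0=A_0=1/2$, $x_0=\mathbf 1/n$ minimizes $d$, and $1$-strong convexity of $d$ in $\ell_1$ gives $d(x)\ge\tfrac12\norm{x-x_0}_1^2$. Therefore
\[
 \psi_0(x)\ge \tfrac12\bigl[f(x_0)+g_0^T(x-x_0)+\norm{x-x_0}_1^2\bigr]\ge \tfrac12\bigl[f(x_0)+g_0^T(x-x_0)+\tfrac12\norm{x-x_0}_1^2\bigr].
\]
Minimizing over $x$ yields the value attained at $y_0$ by \eqref{eq:nesterov-hard}, and the smoothness model \eqref{eq:nesterov-smooth-model} bounds this value below by $\tfrac12 f(y_0)$.

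For the inductive step, assume \eqref{eq:inv-plan} at $k$. The function $\psi_k$ is $1$-strongly convex with minimizer $z_k$, so $\psi_k(x)\ge\psi_k(z_k)+\tfrac12\norm{x-z_k}_1^2$; this uses the first-order optimality of $z_k$ for the entropy-regularized problem, valid since $z_k$ is interior. Then
\[
 \psi_{k+1}(x)\ge A_kf(y_k)+\tfrac12\norm{x-z_k}_1^2+\omega_{k+1}\bigl[f(x_{k+1})+g_{k+1}^T(x-x_{k+1})\bigr].
\]
Lower-bound $A_kf(y_k)\ge A_k[f(x_{k+1})+g_{k+1}^T(y_k-x_{k+1})]$ by convexity. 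The coupling \eqref{eq:nesterov-coupling} says $x_{k+1}=\frac{A_k}{A_{k+1}}y_k+\frac{\omega_{k+1}}{A_{k+1}}z_k$; check $A_k/A_{k+1}=(k+1)/(k+3)$ and $\omega_{k+1}/A_{k+1}=2/(k+3)$. With it, the linear terms combine to
\[
 A_{k+1}f(x_{k+1})+\omega_{k+1}g_{k+1}^T(x-z_k).
\]
Substituting $x=x_{k+1}+\tau(x'-z_k)$ style, i.e. setting $\hat y=\frac{A_k}{A_{k+1}}y_k+\frac{\omega_{k+1}}{A_{k+1}}x$, gives $x-z_k=\frac{A_{k+1}}{\omega_{k+1}}(\hat y-x_{k+1})$. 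The lower bound becomes
\[
 A_{k+1}\Bigl[f(x_{k+1})+g_{k+1}^T(\hat y-x_{k+1})+\frac{A_{k+1}}{2\omega_{k+1}^2}\norm{\hat y-x_{k+1}}_1^2\Bigr].
\]
The key numerical check is $\omega_{k+1}^2\le A_{k+1}$, i.e. $(k+2)^2/4\le(k+2)(k+3)/4$, so the quadratic coefficient is at least $\tfrac12$. Minimizing over $\hat y\in\Delta_n$ gives at least the value at $y_{k+1}$ from \eqref{eq:nesterov-hard}, which by \eqref{eq:nesterov-smooth-model} is at least $f(y_{k+1})$. This closes the induction.

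The main obstacle is the bookkeeping in the inductive step. One must check that the change of variables $x\mapsto\hat y$ maps $\Delta_n$ into $\Delta_n$, which holds because it is a convex combination with $y_k$. One must also check that the weights match the coupling exactly, and that the strong-convexity inequality for $\psi_k$ is used in $\ell_1$ rather than Euclidean form. I would first verify the identities $A_k/A_{k+1}=(k+1)/(k+3)$ and $\omega_{k+1}^2\le A_{k+1}$ explicitly, and justify strong convexity of $\psi_k$ from that of $d$ plus linearity of the remaining terms.
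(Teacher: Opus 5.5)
Your proposal is correct and follows the paper's proof essentially step for step. Both use the same dual-averaging model (your $\psi_k$, the paper's $\Psi_k$) minimized by $z_k$, the same invariant $\Psi_k(z_k)\ge A_kf(y_k)$ with the same base case, and the same change of variables $v=(A_ky_k+\omega_{k+1}x)/A_{k+1}$ with the check $\omega_{k+1}^2\le A_{k+1}$. Your explicit verification of the coupling weights and of first-order optimality at the interior point $z_k$ fills in details that the paper states only briefly.
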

\begin{proof}
Let $g_i=\nabla f(x_i)$ and define
$\ell_i(x)=f(x_i)+g_i^T(x-x_i)$ and
$\Psi_k(x)=d(x)+\sum_{i=0}^k\omega_i\ell_i(x)$.
Since $G_k=\sum_{i=0}^k\omega_i g_i$, minimizing $\Psi_k$
gives the exponential formula for $z_k$ in \eqref{eq:nesterov-soft}.
We prove by induction that
\begin{equation}\label{eq:nesterov-invariant}
 \Psi_k(z_k)\geq A_k f(y_k).
\end{equation}
Strong convexity gives $d(x)\geq\tfrac12\norm{x-x_0}_1^2$.
Using $A_0=\omega_0=1/2$,
\[
 \Psi_0(x)\geq\tfrac12\bigl[
 f(x_0)+g_0^T(x-x_0)+\tfrac12\norm{x-x_0}_1^2\bigr]
 \geq\tfrac12 f(y_0).
\]
The last inequality follows by minimizing the bracket over $\Delta_n$
and applying \eqref{eq:nesterov-smooth-model} at $y_0$.

Assume \eqref{eq:nesterov-invariant} at $k$.
Strong convexity of $\Psi_k$, convexity of $f$, and the coupling
\eqref{eq:nesterov-coupling} imply
\[
 \Psi_{k+1}(x)\geq A_{k+1}f(x_{k+1})
 +\omega_{k+1}g_{k+1}^T(x-z_k)+\tfrac12\norm{x-z_k}_1^2.
\]
For $x\in\Delta_n$, set
$v=(A_k y_k+\omega_{k+1}x)/A_{k+1}\in\Delta_n$.
Then $v-x_{k+1}=\omega_{k+1}(x-z_k)/A_{k+1}$ and
$\omega_{k+1}^2\leq A_{k+1}$, so
\[
 \begin{aligned}
 \Psi_{k+1}(x)
 &\geq A_{k+1}\bigl[f(x_{k+1})+g_{k+1}^T(v-x_{k+1})
                  +\tfrac12\norm{v-x_{k+1}}_1^2\bigr]\\
 &\geq A_{k+1}f(y_{k+1}).
 \end{aligned}
\]
The last step uses the minimizing property of $y_{k+1}$ and
\eqref{eq:nesterov-smooth-model}. This proves the induction.
Finally, convexity gives $\ell_i(x^*)\leq f(x^*)$, hence
$A_k f(y_k)\leq\Psi_k(z_k)\leq\Psi_k(x^*)
\leq d(x^*)+A_k f(x^*)$.
Divide by $A_k$; when $0\in\conv(S)$, use $f(x^*)=0$.
\end{proof}

\subsection{Computing the simplex update}
\label{app:hard}
The Nesterov update for $y_k$ in Section~\ref{sec:comparison}
minimizes a linear term plus half the squared $\ell_1$ distance from
$x_k$. It can be computed by transferring weight from coordinates
with larger gradient entries to one with the smallest entry.
For this calculation, write $x=x_k$ and $g=g_k$.
\begin{enumerate}
\item Choose $j\in\argmin_i g_i$ and put $a=g_j$.
Order the indices $i_1,\ldots,i_r$ with $x_{i_\ell}>0$ and
$g_{i_\ell}>a$ by decreasing $g_{i_\ell}$.
If none exist, return $y_k=x$.
\item Put $C_\ell=\sum_{h\leq\ell}x_{i_h}$ and
$P_\ell=\sum_{h\leq\ell}x_{i_h}g_{i_h}$, with $C_0=P_0=0$.
Transferring total weight $t\in[C_{\ell-1},C_\ell]$ from these
coordinates in order to coordinate $j$ gives objective value
\[
 \phi_\ell(t)=2t^2+at-P_{\ell-1}-(t-C_{\ell-1})g_{i_\ell}.
\]
Its minimum on that interval occurs at
$t_\ell=\min\{C_\ell,\max\{C_{\ell-1},(g_{i_\ell}-a)/4\}\}$.
\item Choose the candidate with smallest $\phi_\ell(t_\ell)$.
Perform that transfer to obtain $y_k$.
\end{enumerate}
To verify optimality, for any feasible change $y-x$, put
$t=\norm{y-x}_1/2$. This is the total weight removed and also the
total weight added, so the quadratic cost is $2t^2$.
For fixed $t\leq C_r$, the smallest linear cost is attained by
removing weight from the largest available entries of $g$ and adding
it at $j$. This is feasible since $C_r\leq1-x_j$ and gives precisely
$\phi_\ell(t)$ on the indicated interval.
A transfer exceeding $C_r$ cannot reduce the linear term below
$aC_r-P_r$ and has larger quadratic cost than the transfer $C_r$,
so it cannot be optimal. If there are no indices in Step~1, every
feasible linear change is nonnegative and $y_k=x$ is optimal.
Sorting costs $O(n\log n)$; all remaining work and auxiliary
storage are $O(n)$.

\section*{Acknowledgments}
The author acknowledges the assistance of ChatGPT (OpenAI) in
mathematical discussions, the development of proofs, the implementation
of computational experiments, and the preparation and revision of the
manuscript. The author assumes responsibility for all content of
the article.

\end{document}